\pgfplotsset{compat=1.18}
\title{Efficient terabyte-scale text compression via stable local consistency and parallel grammar processing} 
\titlerunning{Efficient terabyte-scale text compression} 
\author{Diego {D\'iaz-Dom\'inguez}}{University of Helsinki, Finland}{diego.diaz@helsinki.fi}{https://orcid.org/0000-0002-1825-0097}{}
\authorrunning{D. D\'iaz-Dom\'inguez} 
\keywords{Grammar compression, locally consistent parsing, hashing} 
\def\ATB{\textsf{ATB}}
\def\humans{\textsf{HUM}}
\def\covid{\textsf{COVID}}
\def\kernel{\textsf{KERNEL}}
\def \lcg{\texttt{LCG}}
\def \agc{\texttt{agc}}
\def \sevz{\texttt{zstd}}
\def \rp{\texttt{RePair}}
\def \bigrp{\texttt{BigRePair}}
\begin{document}

\maketitle

\begin{abstract}
We present compression algorithms designed to process terabyte-sized datasets in parallel. Our approach builds on locally consistent grammars, a lightweight form of compression, combined with simple post-processing techniques to achieve further space reductions. Locally consistent grammar algorithms are suitable for scaling as they need minimal satellite information to compact the text, but they are not inherently parallel. To enable parallelisation, we introduce a novel concept that we call \emph{stable} local consistency. A grammar algorithm \textsc{ALG} is stable if for any pattern $P$ occurring in a collection $\mathcal{T}=\{T_1, T_2, \ldots, T_k\}$, instances $\textsc{ALG}(T_1), \textsc{ALG}(T_2), \ldots, \textsc{ALG}(T_k)$ \emph{independently} produce cores for $P$ with the same topology. In a locally consistent grammar, the core of $P$ is a subset of nodes and edges in the parse tree of $\mathcal{T}$ that remains the same in all the occurrences of $P$. This feature enables compression, but it only holds if \textsc{ALG} defines a common set of nonterminal symbols for the strings. Stability removes this restriction, allowing us to run $\textsc{ALG}(T_1), \textsc{ALG}(T_2), \ldots, \textsc{ALG}(T_k)$ in parallel and subsequently merge their grammars into a single output equivalent to that of $\textsc{ALG}(\mathcal{T})$. We implemented our ideas and tested them on massive datasets. Our experiments showed that our method could process 7.9 TB of bacterial genomes in around nine hours, using 16 threads and 0.43 bits/symbol of working memory, achieving a compression ratio of 85x.
\end{abstract}

\section{Introduction}~\label{sec:intro}

Classical dictionary-based compression methods such as Lempel-Ziv (LZ)~\cite{lz76comp,lz77uni} or grammar compression~\cite{Kieffer2000, CLLPPSS05} achieve significant space reductions, but often require extensive resources, limiting their practicality for large datasets. Tools like \texttt{gzip} and \texttt{zstd} provide resource-saving simplifications of LZ that offer acceptable trade-offs for smaller inputs, but still struggle with massive repositories.

Recent heuristics have been developed for large-scale applications. For example, Deorowicz et al.~\cite{d23agc} compress pangenomes by partitioning strings and compressing similar segments together using \texttt{zstd}. Other approaches, Hunt et al.~\cite{atb24}, reorder genomes to improve LZ compression. Grammar algorithms like RePair~\cite{l2000off} and SEQUITUR~\cite{sequitur} achieve high compression ratios, but quickly exceed the available memory as the input grows. Gagie et al.~\cite{g2019rpair} introduced a method using prefix-free parsing~\cite{boucher2019prefix} to scale RePair for large inputs.

Locally consistent grammars~\cite{gawrychowski2018optimal,n2018gr,chri20opt,diaz2021lms,koc24near} is a technique that performs rounds of locally consistent parsing~\cite{mehlhorn1997maintaining, sah94sym, muthu00appr, jez2016really, boucher2019prefix, chri20opt, diaz2021lms} to compress a text $T[1..n]$. This approach recursively segments $T$ based on sequence patterns, producing nearly identical phrases for matching substrings. In the parse tree of a locally consistent grammar, the nodes that cover the occurrences of a pattern $P$ share an area with identical topology and labels. This area is the \emph{core} of $P$~\cite{sah94sym}, and is what makes compression possible. Locally consistent grammars are simple to construct as, unlike LZ or RePair, they only need local information to break $T$ consistently. However, they are not only useful for compression, they also help scaling the processing of large string collections. In fact, they have been used to speed up the computation of the Burrows--Wheeler transform~\cite{diaz2023bwt}, perform pattern matching in grammar-based self-indexes~\cite{chri20opt}, and find maximal exact matches~\cite{diaz2023com,nav24comp}.

Although it is possible to build from $T$ a locally consistent grammar $\mathcal{G}$ of size $O(\delta \log \frac{n\log \sigma}{\delta \log n})$ in $O(n)$ expected time~\cite{koc22toward,koc24near}, $\delta$ being the string complexity~\cite{raskh13sub} and $\sigma$ being the alphabet of $T$, these techniques probably yield less impressive compression ratios in practice than LZ or RePair. However, simple transformations to reduce the size of $\mathcal{G}$ can yield further space reductions. In this regard, Ochoa et al.~\cite{ochoa2018repair} showed that any irreducible grammar can reach the $k$th-order empirical entropy of a string. Their result suggests that building $\mathcal{G}$ and then transforming it might be an efficient alternative to greedy approaches in large datasets.

Parallelising the grammar construction in massive collections is desirable to leverage multi-core architectures. Given an input $\mathcal{T} = \{T_a, T_b\}$, an efficient solution would be to split $\mathcal{T}$ into chunks, say $\{T_a\}$ and $\{T_b\}$, compress the chunks in different instances $\textsc{ALG}(\{T_a\})=\mathcal{G}_a$ and $\textsc{ALG}(\{T_b\})=\mathcal{G}_b$, and merge the resulting (small) grammars $\mathcal{G}_a$ and $\mathcal{G}_b$. However, ensuring the local consistency of the merged grammar is difficult without synchronising the instances. Most locally consistent algorithms assign random fingerprints to the grammar symbols to perform the parsing. However, when there is no synchronisation, different metasymbols emitted by $\textsc{ALG}(\{T_a\})$ and $\textsc{ALG}(\{T_b\})$ that expand to equal sequences of $\mathcal{T}$ could have different fingerprints, thus producing an inconsistent parsing of $T_a$ and $T_b$. Therefore, new locally consistent schemes are necessary to make the parallelisation possible.

\textbf{Our contribution}. We present a parallel grammar compression method that scales to terabytes of data. Our framework consists of two operations. Let $\mathcal{T} = \{T_1, T_2, \ldots, T_k\}$ be a string collection with $\sum_{T_j \in \mathcal{T}} |T_j| =n$ symbols, and let $\mathcal{H}=\{h^{0}, h^{1}, \ldots, h^{l}\}$ be a set of hash functions. The operation $\textsc{BuildGram}(\mathcal{T}, \mathcal{H})=\mathcal{G}$ produces a locally consistent grammar generating strings in $\mathcal{T}$. Furthermore, let $\mathcal{T}_{a}$ and $\mathcal{T}_b$ be two collections with $\textsc{BuildGram}(\mathcal{T}_a, \mathcal{H})=\mathcal{G}_a$ and $\textsc{BuildGram}(\mathcal{T}_b, \mathcal{H})=\mathcal{G}_b$. The operation $\textsc{MergeGrams}(\mathcal{G}_a, \mathcal{G}_b)$ builds a locally consistent grammar $\mathcal{G}_{ab}$ for the collection $\mathcal{T}_{ab}$ that combines $\mathcal{T}_a$ and $\mathcal{T}_b$. \textsc{BuildGram} uses $\mathcal{H}$ to induce a \emph{stable} locally consistent parsing.~The stable property means that $\textsc{BuildGram}(\mathcal{T}_a, \mathcal{H})$ and $\textsc{BuildGram}(\mathcal{T}_b, \mathcal{H})$ \emph{independently} produce cores with the same topology for identical patterns. The set $\mathcal{H}$ assigns random fingerprints to the metasymbols of the grammar under construction to guide the locally consistent parsing, with the fingerprint of a metasymbol $X$ depending on the sequence of its expansion. This feature ensures that metasymbols from different grammars that expand to matching sequences get the same fingerprints. $\textsc{MergeGrams}(\mathcal{G}_a, \mathcal{G}_b)$ leverages the stable property to produce a grammar equivalent to that of $\textsc{BuildGram}(\mathcal{T}_{ab}, \mathcal{H})$, thus allowing us to parallelise the compression. We show that
$\textsc{BuildGram}(\mathcal{T}, \mathcal{H})$ runs in $O(n)$
time w.h.p. and uses $O(G \log G)$ bits of working space, $G$ being the grammar size. Similarly, $\textsc{MergeGrams}(\mathcal{G}_a, \mathcal{G}_b)$ runs in $O(G_a + G_b)$ time and uses $O(G_a\log g_a + G_b\log g_b)$ bits, with $g_a$ and $g_b$ being the number of symbols in $\mathcal{G}_a$ and $\mathcal{G}_b$, respectively. The parsing that we use in \textsc{BuildGram} is inspired by the concept of induced suffix sorting~\cite{n2009li}, which has been shown to be effective for processing strings~\cite{diaz2023bwt,diaz2023com}. In future work, we plan to use our parallel compressor not only to reduce space usage but also to process large inputs. However, we note that the concept of stability is compatible with any locally consistent grammar that uses hashing to break the text. Our experiments showed that our strategy can efficiently compress several terabytes of data. 

\section{Notation and basic concepts}\label{sec:preliminaries}

We consider the RAM model of computation. Given an input of $n$ symbols, we assume our procedures run in random-access memory, where the machine words use $w=\Theta(\log n)$ bits and can be manipulated in constant time. We use the big-$O$ notation to denote time and space complexities (in bits), and the term $\log$ to express the logarithms of base two. 

\subsection{Strings}\label{sec:str}

A string $T[1..n]$ is a sequence of $n$ symbols over an alphabet $\Sigma=\{1,2, \ldots, \sigma\}$. We use $T[j]$ to refer to the $jth$ symbol in $T$ from left to right, and $T[a..b]$ to refer to the substring starting in $T[a]$ and ending in $T[b]$. An equal-symbol run $T[a..b]=s^{c}$ is a substring storing $c$ consecutive copies of $s \in \Sigma$, with $a=1$ or $T[a-1]\neq s$; and $b=n$ or $T[b+1]\neq s$. 

We consider a collection $\mathcal{T}=\{T_1, T_2,\ldots,T_{k}\}$ of $k$ strings as a multiset where each element $T_j \in \mathcal{T}$ has an arbitrary order $j \in [1..k]$. In addition, we use the operator $||\mathcal{T}||=\sum_{T_j \in \mathcal{T}} |T_j|$ to express the total number of symbols. We also use subscripts to differentiate collections (e.g. $\mathcal{T}_a$ and $\mathcal{T}_b$). The expression $\mathcal{T}_{ab} = \mathcal{T}_a \circ \mathcal{T}_b$ denotes the combination of $\mathcal{T}_{a}$ and $\mathcal{T}_b$ into a new collection $\mathcal{T}_{ab}$. We assume that all collections have the same constant-size alphabet $\Sigma$.

\subsection{Grammar compression}\label{sec:grammars}

Grammar compression consists in representing a string $T[1..n] \in \Sigma^{*}$ as a small context-free grammar that generates only $T$~\cite{Kieffer2000,CLLPPSS05}. Formally, a grammar $\mathcal{G}=\{\Sigma, V, \mathcal{R}, S\}$ is a tuple where $\Sigma$ is the alphabet of $T$ (the \emph{terminals}), $V$ is the set of \emph{nonterminals}, $\mathcal{R} \subseteq V \times (\Sigma~\cup~V)^{*}$ is the set of rewriting rules in the form $X \rightarrow Q[1..q]$. The symbol $S \in V$ is the grammar's start symbol. Given two strings $w_a=A{\cdot}X{\cdot}B, w_b=A{\cdot}Q[1..q]{\cdot}B \in (\Sigma\ \cup\ V)^{*}$, $w_b$ rewrites $w_a$ (denoted $w_a \Rightarrow w_b$) if $X \rightarrow Q[1..q]$ exists in $\mathcal{R}$. Furthermore, $w_a$ derives $w_b$, denoted $w_a \Rightarrow^{*} w_b$, if there is a sequence $u_1, u_2,\ldots, u_x$ such that $u_1=w_a$, $u_x=w_b$, and $u_j \Rightarrow u_{j+1}$ for $1\leq j < x$. The string $exp(X) \in \Sigma^{*}$ resulting from $X \Rightarrow^{*} exp(X)$ is the \emph{expansion} of $X$, with the decompression of $T$ expressed as $S \Rightarrow^{*} exp(S)=T$. Compression algorithms ensure that every $X \in V$ occurs only once on the left-hand sides of $\mathcal{R}$. In this way, there is only one possible string $exp(X)$ for each $X$. This type of grammar is referred to as straight-line. The sum of the lengths of the right-hand sides of $\mathcal{R}$ is the grammar size.

The \emph{parse tree} $PT(X)$ of $X \in V$ represents $X \Rightarrow^{*} exp(X)$. Given the rule $X \rightarrow Q[1..q]$, the root of $PT(X)$ is a node $r$ labelled $X$ that has $q$ children, which are labelled from left to right with $Q[1], Q[2], \ldots, Q[q]$, respectively. The $jth$ child of $r$, labelled $Q[j]$, is a leaf if $Q[j] \in \Sigma$; otherwise, it is an internal node whose subtree is recursively defined according to $Q[j]$ and its rule in $\mathcal{R}$.

Post-processing a grammar consists of capturing the remaining repetitions in its rules. For instance, if $XY \in V^{*}$ appears multiple times on the right-hand sides of $\mathcal{R}$, one can create a new rule $Z \rightarrow XY$ and replace the occurrences of $XY$ with $Z$. \emph{Run-length compression} encapsulates each distinct equal-symbol $X^{\ell} \in (\Sigma \cup V)^{*}$ appearing in the right-hand sides of $\mathcal{R}$ as a constant-size rule $X' \rightarrow (X,\ell)$. \emph{Grammar simplification} removes every rule $X \rightarrow Q[1..q]$ whose symbol $X$ appears once on the right-hand sides, replacing its occurrence with $Q[1..q]$.

\subsection{Locally-consistent parsing and grammars}\label{sec:lc_par}

A \emph{parsing} is a partition of a string $T[1..n]$ into a sequence of phrases $T=T[1..j_1-1]T[j_1..j_2-1]\cdots T[j_x..n]$, where the indices $j_1<j_2< \ldots < j_x$ are \emph{breaks}. Let $par(o,o')=T[j_{y}..j_{y+1}-1]T[j_{y+1}..j_{y+2}-1] \ldots T[j_{y+u-1}..j_{y+u}-1]$ denote the $u$ phrases that cover a substring $T[o..o']$, with $o \in [j_{y}..j_{y+1})$ and $o' \in [j_{y+u-1},j_{y+u})$. A parsing is \emph{locally consistent}~\cite{cole1986deterministic} iff, for any pair of equal substrings $T[a..b]=T[a'..b']$, $par(a,b)$ and $par(a',b')$ differ in $O(1)$ phrases at the beginning and $O(1)$ at the end, with their internal phrase sequences identical.


\begin{figure}[t]
\centering
\includegraphics[width=0.7\textwidth]{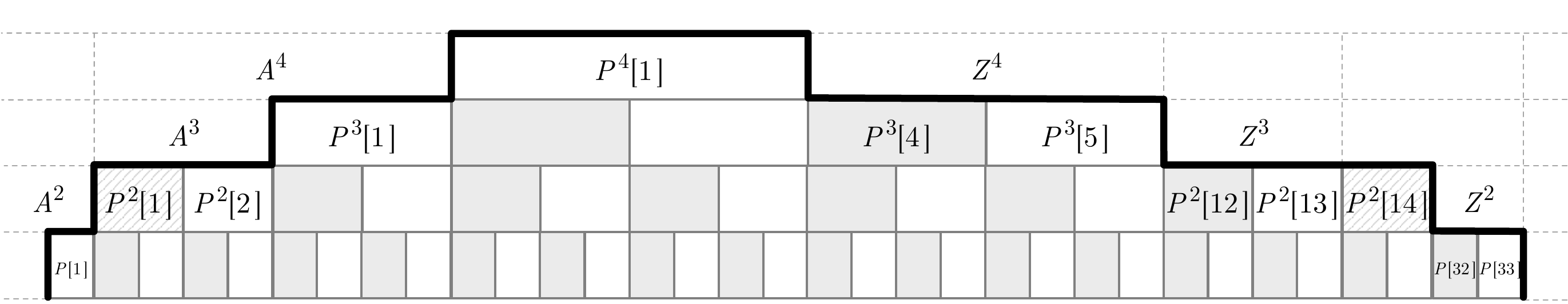}
\caption{Locally consistent grammar compression of $P[1..33]$. The first row (bottom-up) is $P$, and the next rows are the metasymbols for parsing rounds. The grey boxes are breaks. The boxes below the thick black are the core of $P$. The $A^{i}$ and $Z^{i}$ change if the context of $P$ changes.}
\label{fig:core_examp}
\end{figure}

A locally consistent parsing scheme relevant to this work is that of Nong et al.~\cite{n2018gr}. They originally described their idea to perform induced suffix sorting, but it has been shown that it is also locally consistent~\cite {diaz2021lms,deng2022fm}. They define a \emph{type} for each position $T[\ell]$:

\begin{equation}\label{eq:lex_types}
   \begin{array}{ll}
    \text{L-type} \iff T[\ell] > T[\ell+1]\text{ or }T[\ell] = T[\ell+1]\text{ and }T[\ell+1]\text{ is L-type}.\\
    \text{S-type} \iff T[\ell] < T[\ell+1]\text{ or }T[\ell] = T[\ell+1]\text{ and }T[\ell+1]\text{ is S-type}.\\
    \text{LMS-type} \iff T[\ell]\text{ is S-type and }T[\ell-1]\text{ is L-type}.
   \end{array} 
\end{equation}

Their method scans $T$ from right to left and sets a break at each LMS-type position. 

One can create a grammar $\mathcal{G}=\{\Sigma, V, \mathcal{R}, S\}$ that only generates $T$ by applying successive rounds of locally consistent parsing. Different works present this idea slightly differently (see~\cite{sah94sym, muthu00appr, jez2016really, boucher2019prefix, chri20opt, diaz2021lms}), but the procedure is similar: in every round $i$, the algorithm receives a string $T^{i}$ as input ($T^i=T$ with $i=1$) and performs the following steps:

\begin{enumerate}
\item Break $T^{i}$ into phrases using a locally consistent parsing.\label{step_one}
\item Assign a nonterminal $X$ to each distinct sequence $Q[1..q]$ that is a phrase in $T^{i}$.\label{step_two}
\item Store every $X$ in $V$ and its associated rule $X \rightarrow Q[1..q]$ in $\mathcal{R}$.\label{step_three}
\item Replace the phrases in $T^{i}$ by their nonterminals to form the string $T^{i+1}$ for $i+1$.\label{step_four}
\end{enumerate}

The process ends when $T^{i}$ does not have more breaks, in which case it creates the rule $S \rightarrow T^{i}$ for the start symbol and returns the resulting grammar $\mathcal{G}$. The phrases have a length of at least two, so the length of $T^{i+1}$ is half the length of $T^{i}$ in the worst case. Consequently, the algorithm incurs in $O(\log n)$ rounds and runs in $O(n)$ time.

The output grammar $\mathcal{G}$ is locally consistent because it compresses the occurrences of a pattern $P[1..m]$ largely in the same way. The first parsing round transforms the occurrences into substrings in the form $A^{2}[1..a^{2}]{\cdot}P^{2}[1..m^{2}]{\cdot}Z^{2}[1..z^{2}] \in V^{*}$, where the superscripts indicate symbols in $T^2$. The blocks $A^{2}[1..a^{2}]$ and  $Z^{2}[1..z^{2}]$ have $O(1)$ variable nonterminals that change with $P$'s context, while $P^{2}[1..m^{2}]$ remains the same in all occurrences. In the second round, $P^{2}[1..m^{2}]$ yields strings in the form $A^{3}[1..a^{3}]{\cdot}P^{3}[1..m^{3}]{\cdot}Z^{3}[1..z^{3}]$ that recursively have the same structure. The substring $P^{i}[1..m^{i}]$ remains non-empty during the first $O(\log m)$ rounds. The substrings $P^{i}$, with $i=1,\ldots, O(\log m)$, conform the \emph{core} of $P$~\cite{sah94sym} (see Figure~\ref{fig:core_examp}).

\subsection{Hashing and string fingerprints}\label{sec:hs_and_fp}

\emph{Hashing} refers to the idea of using a function $h: \mathcal{U} \rightarrow [m]$ to map elements in a universe $\mathcal{U}$ to integers in a range $[m]=\{0,1,\ldots,m-1\}$ uniformly at random.
When the universe $\mathcal{U}$ is large and only an unknown subset $\mathcal{K} \subset \mathcal{U}$ requires fingerprints over a range $[m]$ with $|\mathcal{K}| < m \ll |\mathcal{U}|$, the typical solution is to use a \emph{universal} hash function. For any pair $x,y \in \mathcal{U}$, a universal hash function ensures that the collision probability is $\text{Pr}[h(x)=h(y)]=1/m$. Let $\mathcal{U} =\{0, 1,2,\ldots, \sigma\}$ be the universe, $p>\sigma$ a prime number, and $a \in [p]_{+}=\{1,2,\ldots,p-1\}$ and $b \in [p]=\{0,1, \ldots, p-1\}$ integers chosen uniformly at random. One can make $h : \mathcal{U} \rightarrow [m]$ universal using the formula $h(x) = ((ax + b) \bmod p) \bmod m$, with $p>m$.

These ideas can be adapted to produce fingerprints for a set $\mathcal{K} \subset \Sigma^{*}$ of strings over an alphabet $\Sigma=\{1,2,\ldots, \sigma\}$~\cite{dietz92pol}. Pick a prime number $p > \sigma$ and choose an integer $c\in [p]_{+}$ uniformly at random. Then, build the degree $q$ polynomial $h(Q[1..q]) = \left(\sum_{i=1}^{q} Q[i]\cdot c^{i-1}\right) \bmod p$ over $[p]$ and regard the symbols $Q[1], Q[2], \ldots, Q[q]$ in $Q[1..q] \in \mathcal{U}$ as the polynomial's coefficients. Additionally, compose $h$ with a universal hash function $h': [p] \rightarrow [m]$ to obtain a fingerprint in $[m]$. Let $a,b,c \in [p]_{+}$ be three integers chosen uniformly at random. Then the function becomes $h'(Q[1..q]) = \left(\left( a \left(\sum_{i=1}^{q} Q[i]\cdot c ^{i-1}\right) + b \right) \bmod p\right) \bmod m.$

\section{Our methods}

\subsection{The grammar model}\label{sec:gram_feat}

We first introduce the features of the locally consistent grammar $\mathcal{G}=\{\Sigma, V, \mathcal{R}, S\}$ we build with \textsc{BuildGram} and \textsc{MergeGrams}. Given a rule $X \rightarrow Q[1..q] \in \mathcal{R}$, the operator $rhs(X)$ is an alias for the string $Q[1..q]$, with $rhs(s)=s$ when $s \in \Sigma$. The \emph{level} of $X$ is the number of edges in the path of $PT(X)$ that starts at its root and ends in a leaf. $\mathcal{G}$ is \emph{fully balanced} because, for any $X \in V \setminus \{S\}$, every root-to-leaf path in $PT(X)$ has the same number of edges. We extend the use of ``level'' for the rule $X \rightarrow Q[1..q] \in \mathcal{R}$ associated with $X$. We indistinctly use the terms nonterminal and metasymbol to refer to $X\in V$.

The start symbol $S$ is associated with a rule $S \rightarrow C[1..k]$ where $C[1..k]$ is a sequence such that each $exp(C[j])=T_j[1..n_j] \in \mathcal{T}$ and $PT(C[j])$ has height $O(\log n_j)$. The symbols in $C[1..k]$ can have different heights and, hence, different levels (this idea will become clear in Section~\ref{sec:gram_algo}). Let $l_{max}$ be the highest level among the elements in $C[1..k]$. We set the level of $S$ equal to $l=l_{max}+1$, which we regard as the height of $\mathcal{G}$.

We define the partitions $\mathcal{R} = \{\mathcal{R}^{1}, \ldots, \mathcal{R}^{l-1}\}$ and $V =\{V^{1}, V^{2}, \ldots, V^{l-1}\}$, where every pair $(\mathcal{R}^{i}, V^{i})$ is the set of rules and nonterminals (respectively) with level $i \in [1..l-1]$. In each subset $\mathcal{R}^{i}$, the left-hand sides are symbols over the alphabet $V^{i}$, while the right-hand sides are strings over $V^{i-1}$. Further, we consider $V^{0}=\Sigma$ to be the set of terminals. 

\subsection{Fingerprints for the grammar symbols}\label{sec:str_hash}

In this section, we describe the set $\mathcal{H}$ of hash functions that assign fingerprints in \textsc{BuildGram}. 
The universal hash function $h^{0} : \Sigma \rightarrow [m_{0}]$ maps terminal symbols to integers over an arbitrary range $[0,1,\ldots, m_{0}-1]$, with $m_0>\sigma$. Furthermore, each function $h^{i}$, with $1 \leq i< l$, recursively assigns fingerprints to the right-hand sides of $\mathcal{R}^{i}$. Let $[m_{i-1}]$ be the integer range for the fingerprints emitted by $h^{i-1}$. We choose a random prime number $p_{i}>m_{i-1}$, three integer values $a_i, b_i, c_i \in [p_i]_{+}$, and a new integer $m_{i}$. Now, given a rule $X \rightarrow Q[1..q] \in \mathcal{R}^{i}$, we compute the fingerprint for $Q[1..q]$ as   

\begin{equation}\label{eq:lvl_hash}
   h^{i}(Q[1..q]) = \left(\left(a_i \left(\sum_{j=1}^{q} h^{i-1}(rhs(Q[j]))\cdot c^{j-1}_i\right) + b_i \right) \bmod p_i\right) \bmod m_i.
\end{equation}


Although $h^{i} : [m^{i-1}]^{*} \rightarrow [m^{i}]$ computes a fingerprint for a string, we associate this fingerprint with $X \in V^{i}$ because each $Q[1..q]$ has one possible $X$. Notice that the recursive definition of $h^{i}$ implicitly traverses $PT(X)$ and ignores the nonterminal labelling $PT(X)$. As a result, the value that $h^{i}$ assigns to $Q[1..q]$ (or equivalently, to $X$) depends on $exp(X)$, the functions $h^{0},h^{1}, \ldots, h^{i}$, and the topology of $PT(X)$. 
In practice, we avoid traversing $PT(X)$ by operating bottom-up over $\mathcal{G}$: When we process $\mathcal{R}^{i}$, the fingerprints in $h^{i-1}$ for $V^{i-1}$ that we require to obtain the fingerprints in $h^{i}$ are available.

\textsc{BuildGram} does not know a priori the number of hash functions $\mathcal{H}$ needs to compress an input. However, the locally consistent grammar algorithm that we use requires $O(\log n)$ rounds of parsing to compress $T[1..n]$ (Section~\ref{sec:gram_algo}). If we consider the number of rounds plus the function $h^{0}$ for the terminals, then $|\mathcal{H}| \geq \lceil \log n \rceil + 1$ is enough to process $T[1..n]$.


\subsection{Our grammar algorithm}\label{sec:gram_algo}

Our procedure $\textsc{BuildGram}(\mathcal{T}, \mathcal{H})$ receives as input a collection $\mathcal{T}=\{T_1, T_2, \ldots, T_k\}$ of $k$ strings and a set $\mathcal{H}$ of hash functions, and returns a locally consistent grammar $\mathcal{G}=\{\Sigma, V, \mathcal{R}, S\}$ that only generates strings in $\mathcal{T}$. We assume $\mathcal{H}$ has at least $\lceil \log n_{max} \rceil +1$ elements (see Section~\ref{sec:str_hash}), where $n_{max}$ is the length of the longest string in $\mathcal{T}$.

The algorithm of \textsc{BuildGram} is inspired by the parsing of Nong et al.~\cite{Li2010a}, which has been used not only for compression~\cite{n2018gr}, but also for the processing of strings~\cite{diaz2023bwt} (see Section~\ref{sec:speed_up}). However, as noted in the Introduction, the ideas we present here and in the next sections are compatible with any locally consistent grammar that uses hashing.

\textbf{Overview of the algorithm}. \textsc{BuildGram} constructs $\mathcal{G}$ in successive rounds of locally consistent parsing. In each round $i$, we run steps~\ref{step_one}-\ref{step_four} of Section~\ref{sec:lc_par}, breaking the strings of $\mathcal{T}$ individually, but collapsing the rules in the same grammar $\mathcal{G}$. When we finish round $i$, we flag each string $T^{i+1}_j \in \mathcal{T}^{i+1}$ with length one as inactive (that is, fully compressed) and stop the compression in round $i+1$ if there are no active strings. Subsequently, we create the sequence $C[1..k]$ with compressed strings (that is, symbols we marked as inactive), create the start symbol $S \in V$ with the corresponding rule $S \rightarrow C[1..k] \in \mathcal{R}$, and finish \textsc{BuildGram}. 

\subsubsection{Parsing mechanism}

We parse the active strings of $\mathcal{T}^{i}$ (step~\ref{step_one} of the round) using a variant of the parsing of Nong et al.~\cite{n2013pr} (Section~\ref{sec:lc_par}) that employs Equation~\ref{eq:lvl_hash} to randomise the sequences that trigger breaks. We refer to this modification as \textsc{RandLMSPar}.

Let $X, Y \in V^{i-1}$ be any pair of nonterminals. We define the partial order $\prec$ as follows:

\[
X \prec Y \iff 
\begin{cases} 
h^{i-1}(rhs(X)) < h^{i-1}(rhs(Y)) & \text{if } h^{i-1}(rhs(X)) \neq h^{i-1}(rhs(Y)), \\
\text{undefined} & \text{if } h^{i-1}(rhs(X)) = h^{i-1}(rhs(Y)).
\end{cases}
\]

Additionally, we define the equivalence relation:
\[
X \sim Y \iff h^{i-1}(rhs(X)) = h^{i-1}(rhs(Y)),
\]

to cover the cases where $X=Y$ or $X\neq Y$ and their fingerprints collide. Now, let $T^{i}_{j}[\ell], T^{i}_j[\ell+1] \in V^{i-1}$ be two adjacent positions in some string $T^{i}_j \in \mathcal{T}^{i}$ during round $i$. We redefine the types of Equation~\ref{eq:lex_types} for $T^{i}_{j}[\ell]$ as follows:

\begin{equation}\label{eq:rand_types}
   \begin{array}{ll}
    \text{L-type} \iff T^{i}_j[\ell] \succ T^{i}_j[\ell+1]\text{ or }T^{i}_j[\ell] \sim T^{i}_j[\ell+1]\text{ and }T^{i}_j[\ell+1]\text{ is L-type}.\\
    \text{S-type} \iff T^{i}_j[\ell] \prec T^{i}_{j}[\ell+1]\text{ or }T^{i}_{j}[\ell] \sim T^{i}_{j}[\ell+1]\text{ and }T^{i}_{j}[\ell+1]\text{ is S-type}.\\
    \text{LMS-type} \iff T^{i}_j[\ell]\text{ is S-type and }T^{i}_{j}[\ell-1]\text{ is L-type}.
   \end{array} 
\end{equation}

The above types are undefined for the suffix $T^{i}_{j}[\ell..n_{j}]=s^c$ that is an equal-symbol run. This restriction implies that $T^{i}_{j}[\ell..n_{j}]$ cannot have LMS-type positions (that is, breaks). 

A substring $T^{i}_{j}[\ell..r]$ is a phrase in \textsc{RandLMSPar} if the following two conditions hold: (i) $\ell=1$ or $T^{i}_{j}[\ell]$ is LMS-type; and (ii) $r=n_{j}$ or $T^{i}_{j}[r+1]$ is LMS-type.

\begin{figure}[!t]
\centering
\includegraphics[width=\textwidth]{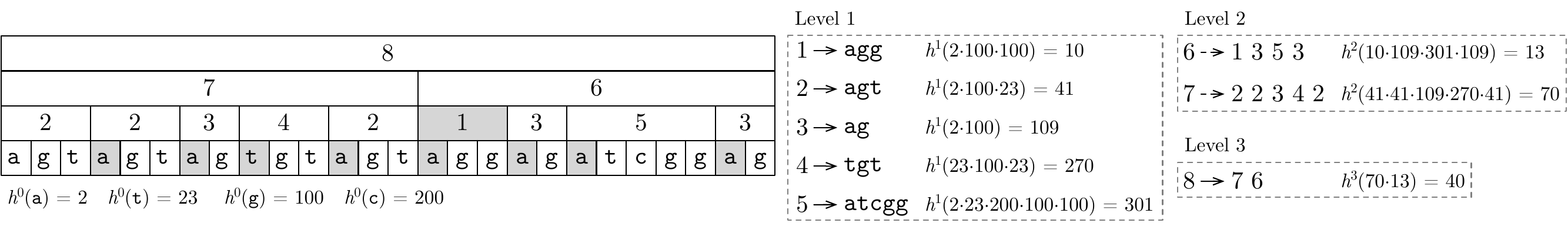}
\caption{Example of \textsc{BuildGram} with the input string \texttt{agtagtagtgtagtaggagatcggag} and the hash functions $\mathcal{H}=\{h^{0}, h^{1}, h^{2}, h^{3}\}$. The grey boxes indicate the breaks induced by $\mathcal{H}$.}
\label{fig:build_gram}
\end{figure}

Once we partition the strings of $\mathcal{T}^{i}$ and store its distinct phrases in a set $\mathcal{S}$, we assign metasymbols to the phrases (step~\ref{step_two} of the round). Let $g^{i-1}=|\Sigma \cup V|$ be the number of symbols $\mathcal{G}$ has when we begin round $i$. We assign the nonterminal $X = g^{i-1} + o \in V^{i}$ to the $oth$ string $Q[1..q] \in \mathcal{S}$ and add $X \rightarrow Q[1..q] \in \mathcal{R}^{i}$. We note that the order of the strings in $\mathcal{S}$ is arbitrary and does not affect the properties of our method. The last step of the parsing round is to create $\mathcal{T}^{i+1}$ by replacing the phrases in $\mathcal{T}^{i}$ with their nonterminals in $V^{i}$. Figure~\ref{fig:build_gram} shows an example of \textsc{BuildGram}.

Our hash-based parsing mechanism induces a property in the grammar that we call $\emph{stable local consistency}$. 

\begin{definition}\label{def:est}
    Stable local consistency: Let {\normalfont \textsc{ALG}} be an algorithm that produces a locally consistent grammar. Additionally, let $P[1..m]$ be a pattern occurring in an arbitrary number of text collections. {\normalfont \textsc{ALG}} is stable iff, for any pair of distinct texts $\mathcal{T}_a\neq \mathcal{T}_b$, the instances {\normalfont $\textsc{ALG}(\mathcal{T}_a)=\mathcal{G}_a$} and {\normalfont $\textsc{ALG}(\mathcal{T}_b)=\mathcal{G}_b$} independently produce a core for $P$ {\normalfont (Section~\ref{sec:lc_par})} with identical tree topology and different nonterminal labels. The term ``independently'' means that {\normalfont $\textsc{ALG}(\mathcal{T}_a)$} does not use information about $\mathcal{G}_b$ in its execution, and vice versa.
\end{definition}

The classification of $T^{i}_j[\ell]$ as a break depends on the fingerprint resulting from the evaluation of $exp(T^{i}_j[\ell])$ with functions $h^{0},h^{1},\ldots, h^{i-1}$. Thus, if $exp(T^{i}[\ell])$ appears in another collection $\mathcal{T}' \neq \mathcal{T}$, surrounded by an identical context, processing $\mathcal{T}'$ with $\mathcal{H}$ produces breaks and a core topology for $exp(T^{i}_{j}[\ell])$ identical to that of $\mathcal{T}$. The stable property depends on the use of $\mathcal{H}$ and not on the parsing algorithm, which means that any locally consistent parsing compatible with hashing (e.g. \cite{mehlhorn1997maintaining,chri20opt,gawrychowski2018optimal}) would achieve similar results. 


\subsection{Implementing our grammar algorithm}~\label{sec:gram_algo_imp}

Calculating the LMS-type positions of \textsc{RandLMSPar} during the round $i$ requires Equation~\ref{eq:rand_types} to obtain the type of each $T^{i}_{j}[\ell]$, which involves knowing the relative $\prec$ order of $T^{i}_{j}[\ell]$ and $T^{i}_{j}[\ell+1]$. We obtain this information by feeding $rhs(T^{i}_j[j])$ and $rhs(T^{i}_{j}[\ell+1])$ to Equation~\ref{eq:lvl_hash}. The problem is that Equation~\ref{eq:lvl_hash} decompresses $T^{i}_{j}[\ell]$ and $T^{i}_{j}[\ell+1]$ from $\mathcal{G}$, adding a logarithmic penalty on the grammar construction. We avoid decompression by keeping an array $F[1..|\Sigma \cup V|]$ that stores the fingerprints of the symbols we already have in $\Sigma \cup V$.

At the beginning of \textsc{BuildGram}, we initialize $F$ with $\Sigma$ elements, where every $F[s]=h^{0}(s)$ stores the fingerprint of $s \in \Sigma$. Then, each round $i$ keeps in $F[T^{i}[\ell]]$ the fingerprint $h^{i-1}(rhs(T^{i}[\ell]))$. After we finish round $i$, we store the fingerprint $F[X]=h^{i}(rhs(X))$ of every $X \in V^{i}$ so that we can compute the types of $\mathcal{T}^{i+1}$ in the next round $i+1$. 

Let $rhs(X)= Q[1..q]$ be the replacement for $X$. We modify Equation~\ref{eq:lvl_hash} as follows:

\begin{equation}\label{eq:fast_hash}
   F[X] = h^{i}(Q[1..q], F) = \left(\left(a_i \left(\sum_{j=1}^{q} F[Q[j]]\cdot c^{j-1}_i\right) + b_i \right) \bmod p_i\right) \bmod m_i.
\end{equation}

This operation is valid because the alphabet of $Q[1..q]$ is $V^{i-1}$, and $F$ already has its fingerprints.

\textbf{A note on collisions}. The consecutive positions $T^{i}_j[\ell] \neq T^{i}_j[\ell+1]$ with $h^{i-1}(rhs(T^{i}[\ell]))=h^{i}(rhs(T^{i}[\ell+1]))$ (i.e. collisions) never cause a break because $T^{i}[\ell] \sim T^{i}[\ell+1]$. Intuitively, the more contiguous colliding symbols we have, the less breaks the parsing produces, and the more inefficient the compression becomes. The chances of this situation are small if the hash function $h^{i-1} \in \mathcal{H}$ emits fingerprints in the range $[m^{i-1}]$ with $|V^{i-1}|<m^{i-1}$. However, we do not know a priori $|V^{i-1}|$, so we have to choose a very large $m^{i-1}$. This decision has a trade-off because a large $m^{i-1}$ means that cells of $F$ require more bits, and hence more working memory. In Section~\ref{sec:exp}, we investigate suitable values for $m^{i}$ in large collections.

Now, we present the theoretical cost of \textsc{BuildGram}.

\begin{theorem}
Let $\mathcal{T}$ be a collection of $k$ strings and $||\mathcal{T}||=n$ symbols, where the longest string has length $n_{max}$. Additionally, let $\mathcal{H}$ be a set of hash functions with $|\mathcal{H}| \geq \lceil \log n_{max} \rceil + 1$ elements. $\textsc{BuildGram}(\mathcal{T}, \mathcal{H})=\mathcal{G}$ runs in $O(n)$ time w.h.p. and requires $O(G\log w)$ bits on top of $\mathcal{T}$, where $G$ is the grammar size of $\mathcal{G}$. 
\end{theorem}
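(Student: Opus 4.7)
I would prove the two bounds separately, handling time first and then space, both relying on the precomputed-fingerprint array $F$ introduced in Section~\ref{sec:gram_algo_imp}.

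\textbf{Time.} I would first bound the number of parsing rounds, then the work per round. By definition of \textsc{RandLMSPar}, two consecutive LMS-type positions must be at distance at least two (the position immediately before an LMS-type position is L-type), so every phrase has length at least two except possibly the first phrase of a string and the trailing equal-symbol run. Therefore, for each $T_j \in \mathcal{T}$ we have $|T^{i+1}_j| \leq \lceil |T^{i}_j|/2\rceil + O(1)$ while $T^{i}_j$ is active, which implies that $T_j$ is marked inactive after $O(\log |T_j|)$ rounds and the total symbols it contributes across all rounds telescope to $O(|T_j|)$. Summing over $j$ yields $\sum_{i} n_i = O(n)$, where $n_i$ is the total length of active strings entering round $i$, and the algorithm finishes after $O(\log n_{max})$ rounds (which is why $|\mathcal{H}| \geq \lceil \log n_{max}\rceil+1$ is sufficient). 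Within round $i$, the key observation is that Equation~\ref{eq:fast_hash} lets us compute the fingerprint of any symbol $T^{i}_j[\ell]$ in $O(1)$ time by a single lookup into $F$, so classifying positions into L-/S-/LMS-type via Equation~\ref{eq:rand_types} takes $O(n_i)$ time after a right-to-left sweep. To assign metasymbols to distinct phrases (step~\ref{step_two}) I would use a hash table keyed by the string $Q[1..q]$ of each phrase, fingerprinted with the same polynomial hash on top of $F$; since the sum of phrase lengths is $n_i$, the total insertion/lookup cost across the round is $O(n_i)$ w.h.p.\ under standard universal-hashing guarantees. Summing, the total running time is $O(n)$ w.h.p.

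\textbf{Space.} I would charge the working memory to three structures: (i) the grammar $\mathcal{G}$ under construction, which has total right-hand side length $G$ and fits in $O(G\log g)$ bits where $g = |\Sigma \cup V|$; (ii) the fingerprint array $F$, which has $g$ cells storing the machine-word-sized outputs of $h^{i}$; and (iii) the hash table mapping phrases to metasymbols in the current round, whose entries again total $O(G)$ symbols. The strings $\mathcal{T}^{i+1}$ are built in place by overwriting $\mathcal{T}^{i}$ and so add no asymptotic overhead on top of $\mathcal{T}$. Combining these contributions, and using that each fingerprint / symbol identifier fits in $O(\log w)$ machine words once we encode metasymbol IDs compactly within the chosen word budget, gives the claimed $O(G\log w)$-bit bound on top of $\mathcal{T}$.

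\textbf{Main obstacle.} The routine part is the telescoping of $n_i$ and the per-position constant-time type check via $F$; the delicate part is the high-probability claim. I would need to argue that the polynomial hash $h^{i}$ composed with universal hashing (Section~\ref{sec:hs_and_fp}) gives $O(1)$ expected work per phrase-table operation and concentrates well enough that, across the $O(n/2^i)$ operations at round $i$ and the $O(\log n_{max})$ rounds, no round suffers more than a constant-factor blow-up with probability $1 - n^{-c}$. This also requires that the moduli $m_i$ are chosen large enough (at least polynomial in the number of symbols ever created at level $i$) so that the ``collision'' issue flagged in Section~\ref{sec:gram_algo_imp} does not force super-constant phrase lengths; a union bound over the $O(n)$ adjacent-symbol comparisons across all rounds then ensures that no spurious $\sim$-relations arise w.h.p., preserving both the phrase-length-$\geq 2$ invariant and the $O(n)$ total work.
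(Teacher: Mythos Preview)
Your approach is essentially the paper's: bound the per-round work by $O(n_i)$ using the precomputed array $F$ and a phrase hash table, then sum a geometric series. Your per-string recurrence $|T_j^{i+1}|\le \lceil |T_j^{i}|/2\rceil+O(1)$ is a slightly cleaner bookkeeping than the paper's global bound $||\mathcal{T}^{i+1}||\le n/2^{i}+2k$ followed by a separate argument that the length-one boundary phrases contribute at most $\sum_{j}\log|T_j|<n$; both arrive at $\sum_i n_i=O(n)$. Your space accounting (grammar $+$ $F$ $+$ round-$i$ phrase table) matches the paper's decomposition $O(G\log G)+g\log w+|\mathcal{H}|w$.

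One point where you overshoot the paper: your ``Main obstacle'' paragraph treats fingerprint collisions in $h^{i-1}$ as a threat to the halving invariant and hence to the $O(n)$ bound. They are not. A collision $T_j^{i}[\ell]\sim T_j^{i}[\ell{+}1]$ with $T_j^{i}[\ell]\neq T_j^{i}[\ell{+}1]$ can only \emph{remove} breaks (it lengthens equal-fingerprint runs), so phrases get longer and $|T_j^{i+1}|$ gets \emph{smaller}; the distance-$\ge 2$ property between consecutive LMS positions is structural and holds regardless of collisions. Thus the shrinkage argument is deterministic, and the paper's ``w.h.p.'' qualifier comes solely from the hash-table operations used to record phrases in step~\ref{step_two}, exactly as you already noted earlier in your time paragraph. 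The collision discussion in Section~\ref{sec:gram_algo_imp} is about compression \emph{quality}, not running time, so you need not invoke a union bound over adjacent-symbol comparisons to establish the theorem.
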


\begin{proof}
Calculating the type of each $T^{i}_j[\ell]$ in $T^{i}_{j}[1..n_j] \in \mathcal{T}^{i}$ takes $O(n_j)$ time if we have the array $F$ with precomputed fingerprints of $V^{i-1}$. In addition, we can use a hash table to record the parsing phrases in $T^{i}_j$, which takes $O(n_x)$ time w.h.p. If we consider all the strings in $\mathcal{T}^{i}$, the running time of the parsing round $i$ is $\mathcal{O}(||\mathcal{T}^{i}||)$ in expectation. On the other hand, all the phrases in $T^{i}_j$ have length $>1$, except (possibly) one phrase at each end of $T^{i}_j$. Therefore, $\mathcal{T}^{i+1}$ has $\frac{n}{2^{i}}+2k$ symbols in the worst case.  
Considering that \textsc{BuildGram} requires $O(\log n_{max})$ rounds, the cumulative length of $\mathcal{T}^{1}, \mathcal{T}^{2}, \ldots, \mathcal{T}^{\log n_{max}}$ is at most $n + \Sigma^{\log n_{max}}_{i=1} \frac{n}{2^{i}} + 2k \leq 2n + 2k \log n_{max}$, with $2k\log n_{max}$ being the contribution of the phrases with length one. However, \textsc{BuildGram} stops processing a string $T_j \in \mathcal{T}$ as soon as it is fully compressed, meaning that length-one phrases contribute $\sum_{T_j \in \mathcal{T}} \log |T_j| \leq k\log n_{max}$ elements in the worst case. Therefore, as $\sum_{T_j \in \mathcal{T}} \log |T_j|< \sum_{T_j \in \mathcal{T}} |T_j|=n$, the running time of \textsc{BuildGram} is $O(n)$ w.h.p.

Let $g=|\Sigma| + |V|$ be the number symbols in $\mathcal{G}$. The $O(G\log G) + g\log w + |\mathcal{H}|w = O(G\log w)$ bits of working space in \textsc{BuildGrams} represent the $O(G\log G)$ bits of the hash tables, the array $F$ that stores the $g$ fingerprints, and the hash functions in $\mathcal{H}$.
\end{proof}

\subsection{Merging grammars}\label{sec:mer_gram}

We now present our algorithm for merging grammars. Let $\mathcal{T}_{a}$ and $\mathcal{T}_{b}$ be two collections, with $n_a$ and $n_b$ being the lengths of the longest strings in $\mathcal{T}_{a}$ and $\mathcal{T}_{b}$, respectively, and $\mathcal{T}_{ab} =  \mathcal{T}_a \circ \mathcal{T}_b$ being their union (Section~\ref{sec:str}). Furthermore, let $\mathcal{G}_a=\textsc{BuildGram}(\mathcal{T}_a, \mathcal{H})$ and $\mathcal{G}_b=\textsc{BuildGram}(\mathcal{T}_b, \mathcal{H})$ be grammars that
only generate strings in $\mathcal{T}_a$ and $\mathcal{T}_b$, respectively. We assume that $\mathcal{H}$ has $|\mathcal{H}| \geq \lceil \log \max(n_a, n_b) \rceil + 1$ elements (see Section~\ref{sec:str_hash}). The instance $\textsc{MergeGrams}(\mathcal{G}_a, \mathcal{G}_a)$ returns a locally consistent grammar $\mathcal{G}_{ab}$ that only generates strings in $\mathcal{T}_{ab}$, and that is equivalent to the output of $\textsc{BuildGram}(\mathcal{T}_{ab}, \mathcal{H})$.


\textbf{Overview of the Algorithm}. The merging consists of making $\mathcal{G}_a$ absorb the content that is unique to $\mathcal{G}_b$. Specifically, we discard rules from $\mathcal{G}_b$ whose expansions occur in $\mathcal{T}_a$, and for those expanding to sequences not in $\mathcal{T}_a$, we add them as new rules in $\mathcal{G}_a$. 

\subsection{The merge algorithm}\label{sec:mg_imp}

For the grammar $\mathcal{G}_a$, let $\mathcal{R}_{a}$ be its set of rules, let $V_a$ be its set of nonterminals, and let $l_a$ be its height. The symbols $\mathcal{R}_b$, $V_b$, and $l_b$ denote equivalent information for $\mathcal{G}_b$. We consider the partitions $\mathcal{R}_{a} = \{\mathcal{R}^{1}_{a}, \mathcal{R}^{2}_{a}, \ldots, \mathcal{R}^{l_a-1}_{a}\}$ and $V_{a} =\{V^{1}_{a}, V^{2}_{a}, \ldots, V^{l_a-1}_{a}\}$, where every $(\mathcal{R}^{i}_{a}, V^{i}_{a})$ is the set of rules and nonterminals (respectively) that $\textsc{BuildGram}(\mathcal{T}_a, \mathcal{H})$ produced during the parsing round $i \in [1..l_a-1]$. The elements $\mathcal{R}_{b} = \{\mathcal{R}^{1}_{b}, \mathcal{R}^{2}_{b}, \ldots, \mathcal{R}^{l_b-1}_{b}\}$ and $V_{b} =\{V^{1}_{b}, V^{2}_{b}, \ldots, V^{l_b-1}_{b}\}$ are equivalent partitions for $\mathcal{G}_{b}$. 

\textsc{MergeGrams} processes the grammar levels $1,2,\ldots, \max(l_{a}, l_{b})-1$ in increasing order. In each round $i$, we keep the invariant that the right-hand sides of $\mathcal{R}^{i}_a$ and $\mathcal{R}^{i}_b$ are comparable. That is, given two rules $X_a \rightarrow Q_a[1..q_a] \in \mathcal{R}^{i}_a$ and $X_b \rightarrow Q_b[1..q_b] \in \mathcal{R}^{i}_b$, $Q_a[1..q_a]=Q_b[1..q_b]$ implies $exp(X_a)=exp(X_b) \in \Sigma^{*}$. When $i=1$, the invariant holds as the right-hand sides of $\mathcal{R}^{1}_a$ and $\mathcal{R}^{1}_b$ are over $\Sigma$, which is the same for $\mathcal{T}_a$ and $\mathcal{T}_b$ (Section~\ref{sec:str}). 

We begin the algorithm by creating an array $L_a[0..l_a]$ that stores in $L_a[i] = \sum_{j=0}^{i-1} |V^{j}_a|$ the number of nonterminals with level $<i$. Observe that $L_a[0]=0$ and $L_a[1]=|\Sigma|$ because $V^{0}_a=\Sigma$. We create an equivalent array $L_b$ for $\mathcal{G}_b$. We also initialize the sets $E^1, E^2, \ldots, E^{l_b-1}$ where every $E^i \subseteq [1..k_b]$ keeps the indexes of $C_b[1..k_b]$ with level-$i$ symbols.

The first step of the merge round $i$ is to scan the right-hand sides of $\mathcal{R}^{i}_a$, and for each $X_a \rightarrow Q_a[1..q_a]$, we modify $X_a=X_a-L_a[i]$ and $Q_a[j]=Q_a[j]-L_a[i-1]$, with $j \in [1..q_a]$. The change $X_a=X_a-L_a[i]$ 
allows us to append new elements to $V^{i}_a$ and $\mathcal{R}^{i}_a$ while maintaining the validity of the symbols on the right-hand sides of $\mathcal{R}^{i+1}_a$, whose alphabet is $V^{i}_a$. Then, we create a hash table $H_a$ that stores every modified rule $X_a \rightarrow Q[1..q_a] \in \mathcal{R}^{i}_a$ as a key-value pair $(Q_a[1..q_a], X_a)$, and an empty array $M_b[1..|V^{i}_b|]$ to update the right-hand sides of $R^{i+1}_{b}$.

We check which right-hand sides in $\mathcal{R}^{i}_b$ occur as keys in $H_a$. Recall that, when $i=1$, the strings in $\mathcal{R}^{i}_a$ are already comparable to the keys in $H_a$ because they are over $\Sigma$ and the subtraction of $L_a[i-1=0]$ does not change their values. For $i>1$, we make these strings comparable during the previous round $i-1$. If the string $Q_b[1..q_b]$ of a rule $X_b \rightarrow Q_b[1..q_b]$ occurs in $H_a$ as a key, we extract the associated value $X_a$ from the hash table. On the other hand, if $Q_b[1..q_b]$ does not exist in $H_a$, we create a new symbol $X_a = |V^{i}_a|+1$ and set $V^{i}_a= V^{i}_a \cup \{X_a\}$. Subsequently, we record the new rule $X_a \rightarrow Q_b[1..q_b]$ in $\mathcal{R}^{i}_a$ and store $M[X_b-L_{b}[i]] = X_a$. Once we process $\mathcal{R}^{i}_b$, we scan the right-hand sides of $\mathcal{R}^{i+1}_b$ and use $M$ to update their symbols. We also use $E^{i}$ to update each position $j \in E^{i}$ as $C_b[j]=M[C_b[j]-L_b[j]]$. Now, we discard $\mathcal{R}^{i}_b$ and continue to the next merge round $i+1$.

\begin{figure}[t]
\centering
\includegraphics[width=0.9\textwidth]{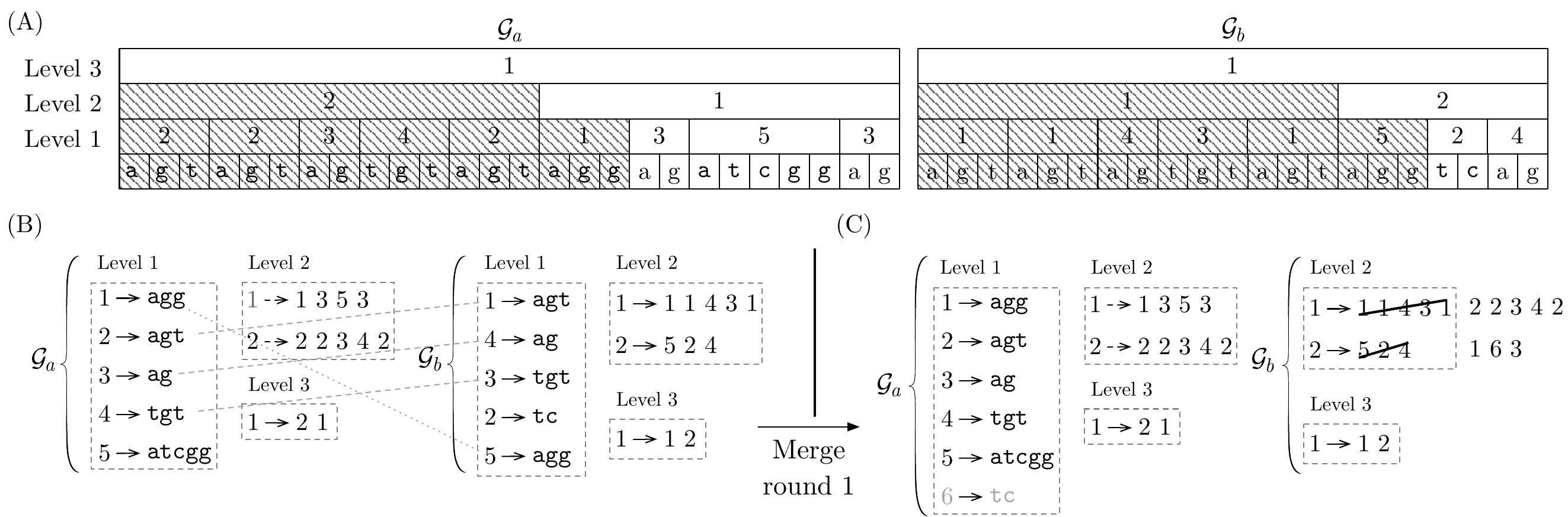}
\caption{
Example of $\textsc{MergeGrams}(\mathcal{G}_a, \mathcal{G}_b)$.
As the parsing is stable, $T_a[1..19]=T_b[1..19]$ have cores (dashed boxes) with the same topology in $\mathcal{G}_a$ and $\mathcal{G}_b$ . In (B-C), the nonterminals represent the relative position of their rules in their corresponding levels. For example, the left-hand side of $2 \rightarrow 5 2 4$ in $\mathcal{G}_b$ (side B) is $2$ because that rule is the second in level 2. On the other hand, the symbol $2$ in $5 2 4$ refers to the second rule of level 1. In the first merge round, \textsc{MergeGrams} checks which right-hand sides in level one of $\mathcal{G}_b$ are also right-hand sides in level one of $\mathcal{G}_b$ (dashed lines in side B). Only \texttt{tc} is not in $\mathcal{G}_a$, so the algorithm appends it at the end of level one in $\mathcal{G}_a$ and assigns it the new metasymbol $6$ (side C). Subsequently, it discards level one in $\mathcal{G}_b$ and updates the right-hand sides of level two in $\mathcal{G}_b$ according to their corresponding metasymbols in $\mathcal{G}_a$. In (C), the rule $1 \rightarrow 1 1 4 3 1$ becomes $1 \rightarrow 2 2 3 4 2$ and $2 \rightarrow 5 2 4$ becomes $2 \rightarrow 1 6 3$. For example, $5$ becomes $1$ on the right-hand side of $2 \rightarrow 5 2 4$ because the level one rule $5 \rightarrow \texttt{agg}$ in $\mathcal{G}_b$ matches $1 \rightarrow \texttt{agg}$ in $\mathcal{G}_a$ (see dashed lines in side B). After the update, \textsc{MergeGrams} goes to the next round and operates recursively.}
\label{fig:merge_gram}
\end{figure}

When we finish round $i$, the right-hand sides of $\mathcal{R}^{i+1}_b$ are over $[1..|V^{i}_a|]$, and the right-hand sides of $\mathcal{R}^{i+1}_a$ will be over $[1..|V^{i}_a|]$ after we update their values with $L_a$. These modifications will make both strings sets comparable, and our invariant will hold for $i+1$.

After $\min(l_a, l_b)-1$ rounds of merge, one of the input grammars will run out of levels. The remaining rounds will skip the creation and query of $H_a$, and will append new rules directly to $\mathcal{R}^{i}_a$ (if any). After we finish the rounds, we concatenate the compressed strings to form $C_{ab}[1..k_a+k_b] = C_a[1..k_a]{\cdot}C_b[1..k_b]$ and update the starting rule $S_a \rightarrow C_{ab}[1..k_a+k_b]$. The last step of \textsc{MergeGrams} is to modify the symbols in $\mathcal{G}_{a}$. For that purpose, we recompute $L_a$
, and for every level-$i$ rule $X_a \rightarrow Q_a[1..q_a]$, we set $X_a = X_a+L_a[i]$ and $Q_a[j]=Q_a[j]+L_a[i-1]$, with $j \in [1..q_a]$. 
Figure~\ref{fig:merge_gram} shows an example of \textsc{MergeGrams}. 

\begin{theorem}
Let $\mathcal{G}_a$ (respectively, $\mathcal{G}_b$) be a locally consistent grammar that generates strings in a collection $\mathcal{T}_a$ of $k_a$ strings (respectively, a collection $\mathcal{T}_b$ with $k_b$ strings). The size of $\mathcal{G}_a$ is $G_a$ and the size of $\mathcal{G}_b$ is $G_b$. Similarly, let $g_a=|\Sigma|+|V_a|$ be the number of grammar symbols (respectively, $g_b=|\Sigma|+|V_b|$). {\normalfont $\textsc{MergeGrams}(\mathcal{G}_a, \mathcal{G}_b)=\mathcal{G}_{ab}$} builds a locally consistent grammar generating strings in $\mathcal{T}_{ab}=\mathcal{T}_a \circ \mathcal{T}_b$ in $O(G_a+G_b)$ time w.h.p. and $O(G_a\log g_a + G_b\log g_b)$ bits of space 
\end{theorem}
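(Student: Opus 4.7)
The plan is to split the proof into three parts: correctness (the output really is the grammar $\textsc{BuildGram}(\mathcal{T}_{ab},\mathcal{H})$ would produce, up to relabelling), running time, and working space. The central object is the invariant already stated informally in Section~\ref{sec:mg_imp}: at the start of round $i$, for any $X_a\to Q_a[1..q_a]\in\mathcal{R}^{i}_a$ and $X_b\to Q_b[1..q_b]\in\mathcal{R}^{i}_b$, $Q_a=Q_b$ iff $exp(X_a)=exp(X_b)$. I would prove this by induction on $i$. The base case $i=1$ is immediate, since both sides of the rules live over $\Sigma=V^{0}_a=V^{0}_b$ and the identity translation of symbols keeps them literally comparable. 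For the inductive step, the hash table $H_a$ detects exactly those level-$i$ right-hand sides that both grammars share; by the inductive hypothesis these correspond to identical expansions, so mapping $X_b\to X_a$ through $M_b$ preserves expansions. For level-$i$ rules of $\mathcal{G}_b$ that are new, a fresh symbol is appended to $V^{i}_a$, and the update of $\mathcal{R}^{i+1}_b$ using $M_b$ rewrites its right-hand sides into the alphabet of the extended $V^{i}_a$, reinstating the invariant at level $i+1$.

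From the invariant I would derive equivalence to $\textsc{BuildGram}(\mathcal{T}_{ab},\mathcal{H})$ as follows. Stability (Definition~\ref{def:est}) together with Equation~\ref{eq:lvl_hash} implies that every metasymbol's fingerprint, and hence every LMS break in \textsc{RandLMSPar}, depends only on the expansion and on $\mathcal{H}$. Therefore the multiset of distinct phrases that would arise at level $i$ when parsing $\mathcal{T}_{ab}$ directly equals the union of the distinct level-$i$ phrases produced independently on $\mathcal{T}_a$ and $\mathcal{T}_b$. By the invariant, the merge keeps one representative of each such phrase in $\mathcal{R}^{i}_a$ and discards duplicates, so $\mathcal{R}_{ab}$ contains exactly the same rules as $\textsc{BuildGram}(\mathcal{T}_{ab},\mathcal{H})$ up to the arbitrary naming of nonterminals (a degree of freedom already allowed by \textsc{BuildGram}, since the order of $\mathcal{S}$ in step~\ref{step_two} is arbitrary). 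The final concatenation $C_{ab}=C_a\cdot C_b$ and the two relabelling passes with $L_a$ just put $V_a$ in the canonical layer-by-layer numbering used by \textsc{BuildGram}.

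For the running time, round $i$ performs $O(|\mathcal{R}^{i}_a|+|\mathcal{R}^{i}_b|)$ hash-table operations on keys of total length $O(\|\mathcal{R}^{i}_a\|+\|\mathcal{R}^{i}_b\|)$, each taking constant expected time per symbol with standard universal hashing, plus the linear passes that subtract/add $L_a,L_b$ and rewrite $\mathcal{R}^{i+1}_b$ using $M_b$. Summing over all $\max(l_a,l_b)-1$ rounds telescopes to $O(G_a+G_b)$ w.h.p. For the working space, the dominant structures in round $i$ are the hash table $H_a$ storing keys from $\mathcal{R}^{i}_a$ and the map $M_b[1..|V^{i}_b|]$, whose cells require $\lceil\log g_a\rceil$ and $\lceil\log g_b\rceil$ bits respectively; the arrays $L_a,L_b,E^i$ and the scratch needed to rewrite $\mathcal{R}^{i+1}_b$ are all bounded by the same quantities. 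As both $H_a$ and $M_b$ are released at the end of round $i$ and the grammars themselves already use $O(G_a\log g_a+G_b\log g_b)$ bits, the working space stays in $O(G_a\log g_a+G_b\log g_b)$.

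The main obstacle I expect is not the time or space bookkeeping but making the equivalence with $\textsc{BuildGram}(\mathcal{T}_{ab},\mathcal{H})$ precise, since the two computations name their nonterminals in different orders and may also visit equal-hash collisions in different surrounding contexts. I would handle this by fixing, at each level, a canonical bijection between the nonterminals of $\mathcal{G}_{ab}$ produced by \textsc{MergeGrams} and those produced by \textsc{BuildGram} on $\mathcal{T}_{ab}$, defined by equality of expansion; the invariant plus stability guarantees this map is well defined, and a short induction on levels shows that it commutes with $rhs(\cdot)$, which is exactly the notion of grammar equivalence needed.
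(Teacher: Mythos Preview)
Your time and space arguments are essentially the paper's own: charge round $i$ with $O(G^{i}_a+G^{i}_b)$ for building and querying $H_a$ and for the linear rewrite of $\mathcal{R}^{i+1}_b$ via $M_b$, then sum over levels to get $O(G_a+G_b)$ w.h.p.; likewise, bound the per-round space by the encodings of $\mathcal{R}^{i}_a,\mathcal{R}^{i}_b$, the hash table $H_a$, and the array $M_b$, and observe that these are dominated by $O(G_a\log g_a+G_b\log g_b)$. The paper presents exactly this bookkeeping, only slightly more itemised (it separates the cost of computing $L_a,L_b$ and the sets $E^i$, and uses the ``transfer the cost of updating $\mathcal{R}^{i+1}_b$ to round $i+1$'' trick explicitly).

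Where you differ is that you devote most of your plan to \emph{correctness}: the level-wise invariant that right-hand sides are comparable, and the equivalence of the merged grammar to $\textsc{BuildGram}(\mathcal{T}_{ab},\mathcal{H})$ up to relabelling. The paper's proof does not argue this at all; it treats correctness as following from the algorithm description in Section~\ref{sec:mg_imp} together with stability, and the theorem's proof is purely a resource analysis. Your inductive argument for the invariant and the bijection-by-expansion between the two grammars is sound and is the natural way to make the equivalence precise; the one point worth making explicit in your write-up is that \textsc{BuildGram} parses each string of a collection \emph{individually}, so the level-$i$ phrase sets of $\mathcal{T}_{ab}$ really are the union of those of $\mathcal{T}_a$ and $\mathcal{T}_b$ (no phrase straddles a string boundary). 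With that caveat, your extra material is a genuine strengthening of what the paper proves, not a deviation from it.
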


\begin{proof}
We obtain $L_a$ and $L_b$ in one scan of the nonterminal sets, which takes $O(g_a+g_b)$ time, It is also possible to obtain the sets $E^1, E^2, \ldots, E^{l_b}$ in $O(G_b)$ time. Let $G_a^{i}$ (respectively, $G_b^{i}$) be the number symbols on the right-hand sides of $\mathcal{R}^{i}_a$ (respectively, $\mathcal{R}^{i}_b$). Filling in the hash table $H_a$ requires a linear scan of $\mathcal{R}^{i}_a$, which runs in $O(G^{i}_a)$ time w.h.p. On the other hand, scanning $\mathcal{R}^{i}_b$ and querying its right-hand sides in $H_a$ takes $O(G^{i}_b)$ time w.h.p. In addition, modifying the right-hand sides of $\mathcal{R}^{i+1}_b$ with $M$ takes $O(G^{i+1}_{b})$ time. If we transfer the cost of updating $\mathcal{R}^{i+1}_b$ to the next round $i+1$, performing the merge round $i$ takes $O(G_a^{i}+G_{b}^{i})$ time w.h.p. Now, let $g^{i}_{a}=|V^{i}_a|$ and $g^{i}_b=|V^{i}_b|$ be the number of level-$i$ symbols. We require $G^{i}_a\log g^{i-1}_a + G^{i}_b\log g^{i-1}_b$ bits to encode $\mathcal{R}^{i}_a$ and $\mathcal{R}^{i}_b$, $O(G^{i}_a\log g^{i-1}_a)$ bits for $H_a$, and $g^{i}_b\log g^{i}_b$ bits for $M$. Consequently, the cost of the round $i$ is $O(G^{i}_a\log g^{i-1}_a + G^{i}_b\log g^{i-1}_b)$ bits. As $G_a=k_a+|\Sigma|+\sum_{j=1}^{l_a-1} G^{i}_a$ and $G_b=k_b+|\Sigma|+\sum_{j=1}^{l_b-1} G^{i}_b$, then \textsc{MergeGrams} runs in $O(G_a+G_b)$ time w.h.p and uses $O(G_a\log g_a + G_b\log g_b)$ bits of space.
\end{proof}

\section{Experiments}\label{sec:exp}

\textbf{Implementation details}. We implemented our framework in \texttt{C++} in a tool called \texttt{LCG} (\url{https://github.com/ddiazdom/lcg}). We support parallel compression by interleaving executions of \textsc{BuildGram} and \textsc{MergeGrams} as follows: given a collection $\mathcal{T}$ and an integer $p$, \lcg~uses $p$ threads that execute \textsc{BuildGram} in parallel to compress different subsets of $\mathcal{T}$ into different buffer grammars. When the combined space usage of the buffer grammars exceeds a given threshold, \lcg~merges them into a sink grammar using \textsc{MergeGrams} and resets the buffer grammars. Section~\ref{sec:imp_det} explains this idea in more detail. We refer to this strategy as \textsc{PBuildgram} to differentiate it from our description in Section~\ref{sec:gram_algo}. After running \textsc{PBuildGram}, \lcg~run-length compresses the output (\textsc{RL} step), and then removes unique nonterminals from the output of \textsc{PBuildGram} + \textsc{RL} (\textsc{Simp} step). 

\textbf{Experimental setup and inputs}. We compared \lcg~against other state-of-the-art compressors, measuring the compression speed in MB per second (MB/s), the peak of the working memory in bits per symbol (bps), and the compression ratio. Furthermore, we assessed the amount of compression \lcg~achieves, its resource usage, and how it scales with the number of threads. We conducted the experiments on a machine with AlmaLinux 8.4, 3 TiB of RAM, and processor Intel(R) Xeon(R) CPU E7-8890 v4 @ 2.20GHz, with 192 cores. We tested four collections. \humans: all the human genome assemblies available in NCBI up to August 27, 2024 (3.46 TB, $\sigma$=16). \ATB: release 2.0 of the AllTheBacteria dataset~\cite{atb24}, which contains genomes of bacteria and archaea (7.9 TB, $\sigma=5$). \covid: all the SARS-CoV-2 genomes in NCBI up to November 5 ($267.4$ GB, $\sigma=16$). \kernel: last 40 versions of the Linux kernel (.h, .c, .txt, and .rst files) up to December 13, 2024 ($54.4$ GB, $\sigma=190$). 

\textbf{Competitor tools}. \sevz~(\url{https://github.com/facebook/zstd}) is an efficient tool that uses a simplified version of LZ and encodes the output using Huffman~\cite{huff} and ANS~\cite{ans}. \agc~(\url{https://github.com/refresh-bio/agc}) is a compressor for highly similar genomic sequences by Deorowicz et al.~\cite{d23agc} that breaks strings into segments and groups segments into blocks according to sequence similarity. \rp~(\url{https://users.dcc.uchile.cl/~gnavarro/software/repair.tgz}) is a popular grammar compression algorithm by Larsson and Moffat~\cite{l2000off} that recursively replaces the most common pair of symbols in the text. \bigrp~(\url{https://gitlab.com/manzai/bigrepair}) is a RePair variant by Gagie et al.~\cite{g2019rpair} that scales the compression by using prefix-free parsing~\cite{boucher2019prefix}. \lcg, \agc, \sevz, and~\bigrp~support multithreading, so we used 16 threads in each. \rp~does not support multithreading. 
For \sevz, we used compression level 15 and a window size of 2 GiB --the tool does not allow longer windows with compression level $\geq 15$. 

\subsection{Results and discussion}

\begin{table}[t]
    \centering
    \resizebox{\textwidth}{!}{%
        \begin{tabular}{l|cccc|cccc|cccc}
            \hline
            \multicolumn{1}{c|}{Tool} & \multicolumn{4}{c|}{Compression ratio} & \multicolumn{4}{c|}{Compression speed} & \multicolumn{4}{c}{Working memory}\\
            \multicolumn{1}{c|}{}         & \multicolumn{4}{c|}{plain/compressed}  & \multicolumn{4}{c|}{MB/s}        & \multicolumn{4}{c}{bps}\\
            & \ATB & \humans & \covid & \kernel & \ATB & \humans & \covid & \kernel& \ATB & \humans & \covid & \kernel\\
            \hline
            \lcg   & \textbf{85.26} & 135.54 & 328.10 & 99.99  & \textbf{232.26} & \textbf{244.73} & \textbf{506.44} & \textbf{237.91} & 0.43 & 0.29 & 0.36 & 2.05\\
            \agc   & - & \textbf{144.90} & 237.93 & - & - & 120.42 & 53.71 & - & - & 0.15 & 0.18 & -\\
            \sevz  & 58.19 & 4.72 & 344.99 & 38.23 & 95.51 & 27.21 & 442.79 & 226.45 & \textbf{0.004} & \textbf{0.01} & \textbf{0.10} & \textbf{0.53}\\
            \rp    & - & - & \textbf{778.62} & \textbf{127.03} & - & - & 1.89 & 2.21 & - & - & 30.63 & 150.11 \\
            \bigrp & - & - & 586.12 & 109.87  & - & - & 47.89 & 21.04 & - & - & 2.28 & 3.69\\
            \hline
        \end{tabular}
    }
    \caption{Performance of the competitor tools. The best result of each column is in bold. Cells with a dash are experiments that crashed or are incompatible with the input.}
    \label{tab:results}
\end{table}
\subsubsection{Comparison with other text compressors}

\lcg~was the fastest tool, with a speed ranging $232.26 - 506.44$ MB/s (Table~\ref{tab:results}). The speed of the other tools varied with the input, but \rp~remained the slowest ($1.89$-$2.21$ MB/s). The fact that \rp~used one thread and the other tools 16 alone does not explain these results. For instance, \lcg~was 268 times faster than \rp~in \covid~(506.44 MB/s versus 1.89 MB/s, respectively). We also observed that \lcg~achieved higher speeds in more compressible inputs (e.g. \covid), probably because the hash tables recording phrases from the text (see encoding in Section~\ref{sec:gram_enc}) have to perform more lookups than insertions --lookups are cheaper.

The most space-efficient tool was~\sevz, with a working memory usage of $0.004-0.53$ bps. This result is due to the cap of $2$ GiB that \sevz~uses for the LZ window, regardless of the input size. This threshold keeps memory usage low, but limits compression in large datasets where repetitiveness is spread (\humans~or \kernel). On the other hand, \sevz~with \ATB~yielded important space reductions probably because Hunt et al.~\cite{atb24} preprocessed \ATB~to place similar strings close to each other to improve LZ compression. \lcg~used far less working memory than the other grammar compressors ($0.29-2.05$ bps versus $30.63-150.11$ of \rp~and $2.28-3.69$ of \bigrp), though it is still high compared to \sevz.

\rp~obtained the best compression ratios and is likely to outperform the other tools in \ATB~and \humans~--where we could not run \rp. Compared to \lcg, \rp~achieved $2.37$ times more compression in \covid~and 1.27 times more in~\kernel. The difference was smaller with~\bigrp, with \rp~achieving $1.32$ times more compression in \covid~and $1.15$ times in \kernel. Despite \lcg~did not obtain the best ratio, it still achieves important reductions, and its trade-off between compression and resource usage seems to be the best. Besides, it is still possible to further compress in \lcg~by applying RePair or Huffman encoding. We think these additional steps would be fast, as they operate over a small grammar.

\begin{table}[t]
    \centering
    \resizebox{\textwidth}{!}{%
    \begin{tabular}{
    l
    S[table-format=-1.0e-1, round-precision = 2, round-mode = figures, scientific-notation = true]
    S[table-format=-1.0e-1, round-precision = 2, round-mode = figures, scientific-notation = true]
    S[table-format=-1.0e-1, round-precision = 2, round-mode = figures, scientific-notation = true]
    S[table-format=-1.0e-1, round-precision = 2, round-mode = figures, scientific-notation = true]
    S[table-format=-1.0e-1, round-precision = 2, round-mode = figures, scientific-notation = true]
    S[table-format=-1.0e-1, round-precision = 2, round-mode = figures, scientific-notation = true]
    S[table-format=-1.0e-1, round-precision = 3, round-mode = figures, scientific-notation=true]
    c
    c
    c
    }
        \hline
        \multicolumn{1}{l}{Dataset} & \multicolumn{2}{c}{\texttt{RePair}} &\multicolumn{8}{c}{\texttt{LCG}} \\\cmidrule(lr){2-3}\cmidrule(lr){4-11}
                                     &  &                                  &\multicolumn{4}{c}{\textsc{PBuildGram}}                                                                                    & \multicolumn{2}{c}{\textsc{RL}}      & \multicolumn{2}{c}{\textsc{Simp}}\\\cmidrule(lr){4-7}\cmidrule(lr){8-9}\cmidrule(lr){10-11}
                                     &\multicolumn{1}{c}{$g$} & \multicolumn{1}{c}{$G$}   &\multicolumn{1}{c}{$g$}    & \multicolumn{1}{c}{$G$} & \multicolumn{1}{c}{$max(g^{i})$} & \multicolumn{1}{c}{$max(G^{i})$} & \multicolumn{1}{c}{\% nonter.}   & Size      & \% deleted & Size                   \\
                                     & &                                                    &                           &                         &                                  &                                   & \multicolumn{1}{c}{increase} & reduction & rules   & reduction              \\
        \hline
        \ATB                         &\text{-} & \text{-}  & 9487673329 & 29475041123 & 3289664779 & 10457037296 & 0.00668 & 0.69  & 83.35 & 27.02 \\
        \humans                      &\text{-} & \text{-}  & 2397891525 & 12287192412 & 445803832  & 1454454746  & 0.02161 & 36.73 & 72.30 & 22.31 \\
        \covid                       &16568289 & 114476810 & 108043879  & 507843065   & 15270975   & 180659822   & 0.00522 & 35.16 & 88.71 & 29.11 \\
        \kernel                      &65668740 & 131876293 & 64044303   & 217799801   & 23924433   & 75625197    & 0.02696 & 1.76  & 82.25 & 24.62 \\
        \hline
    \end{tabular}
    }
    \caption{Compression statistics. The value $g=|V|$ is number of nonterminals, $G=\sum_{X \in V} |rhs(X)|$ is the grammar size, $max(g^{i}) = \max_{i \in [1..\ell-1]} |V^{i}|$ is the maximum number of nonterminals in a level, $max(G^{i}) = \max_{i \in [1..l-1]} \sum_{X \in V^{i}} |rhs(X)|$ is the maximum level size. ``Size reduction'' is the percentage of decrease for $G$ relative to previous grammar. The column ``\% nonter. increase'' is the percentage of increase for $g$ relative to the grammar of \textsc{PBuildGram}.}
    \label{tab:recomp_results}
\end{table}

\subsubsection{Breakdown of our method}

\textbf{Compression}. \textsc{PBuildGram} produced substantially larger grammars than \rp, being $69.32\%$ larger in \kernel~and $363.63\%$ larger in \covid~(see Table~\ref{tab:recomp_results}). However, we expected this result as \textsc{PBuildGram} is not greedy. Interestingly, \textsc{PBuildGram} produced fewer nonterminals than \rp~in \kernel~($6.4{\times}10^{7}$ versus $6.6{\times}10^{7}$, respectively). The maximum number of nonterminals $max(g^{i})$ produced in a round was $3.3 \times 10^{9} < 2^{32}-1$ (\ATB), indicating that 32 bit fingerprints (i.e. $m^{i}<2^{32}$ in $\mathcal{H}$) are likely to be enough to keep the number of collisions low in repetitive collections with dozens of TB, although nonrepetitive collections of this size might require larger fingerprints. Recall from Section~\ref{sec:gram_algo_imp} that the more bits we use for fingerprints, the less collisions we have, and hence less impact on the compression, but it also means more working memory. The effect of \textsc{RL} varied with the input: the grammar size reductions ranged between $0.69\%$ and $36.73\%$, with a negligible increase in the number of nonterminals. \textsc{RL} performed well in \humans~and \covid~because they have long equal-symbol runs of \texttt{N}s. On the other hand, \textsc{Simp} removed $81.65\%$ of the nonterminals and reduced the size of the grammar by $25.76\%$, on average. 

\textbf{Resource usage}. The grammar encoding we chose (Section~\ref{sec:gram_enc}) had an important effect on the usage of resources. \lcg~spent $93.2\%$ of its running time, on average, executing~\textsc{PBuildGram} (Figure~\ref{fig:lcg_per}A). The bottleneck was the lookup/insertion of phrases in the hash tables $H^{1}, H^{2}, \ldots, H^{l-1}$ of the grammars when \textsc{BuildGram} (subroutine of \textsc{PBuildGram}) parsed its input text. These hash table operations are costly because of the high number of cache misses and string comparisons. In particular, the first three parsing rounds of \textsc{BuildGram} are the most expensive because the text is not small enough and produces a large set of phrases (see Figure~\ref{fig:pround_breakdown}). Consequently, \textsc{BuildGram} has to hash more frequently and in larger hash tables in those rounds. The impact of the other steps is negligible, with \textsc{RL} and \textsc{Simp} accounting (on average) for $1.82\%$ and $4.98\%$ of the running time, respectively. The working memory usage (Figure~\ref{fig:lcg_per}B) varied with the input: in \ATB~and \humans, the sink grammar and the fingerprints of \textsc{PBuildGram} accounted for $75\%-82\%$ of the usage. The rest of the memory was satellite data: the arrays and grammars of the thread buffers. The results are different in \covid~and \kernel, where the memory usage is dominated by satellite data. The hash tables $H^{1}, H^{2}, \ldots, H^{l-1}$ imposed a considerable space overhead as they store their keys (i.e. parsing phrases) in arrays of 32-bit cells to perform fast lookups (via \texttt{memcmp}). We can reduce the memory cost by using hash tables that store keys using VBytes instead. In this way, the keys still use an integral number of bytes, and we can still use \texttt{memcmp} for lookups. Our preliminary tests (not shown here) suggest that using VBytes in the hash tables (and 32-bit fingerprints) reduces the space of the sink grammar by $39\%$ in \humans, $47\%$ in \covid, and $45\%$ in \kernel. If we consider that the buffer grammars use the same encoding, the change to VBytes could drastically reduce the peak of working memory in \lcg. We can decrease working memory even further by keeping parts of the sink grammar on disk. 


\textbf{Effect of parallelism}. The compression speed of \lcg~ in \humans~increased steadily with the threads (Figure~\ref{fig:lcg_per}C), while the working memory peak remained stable in $0.29$ bps until $16$ threads. After that, the peak increased to 0.31 bps with $20$ threads, and $0.33$ with $24$.

\begin{figure}[t]
\centering
\resizebox{\textwidth}{!}{ %
\begin{tikzpicture}[
   declare function={
    barW=3.5pt; 
    barShift=barW/2; 
  }
]
\begin{axis}[
    ybar stacked,
    at={(0,0)},
    anchor=south west,
    width=0.3\textwidth, 
    height=0.3\textwidth,
    enlarge x limits=0.2,
    enlarge y limits=0.1,
    ymin=0, ymax=1, 
    legend style={at={(-0.5,0.5)}, anchor=east, column sep=1ex, align=center, legend cell align=left},
    ylabel={Running time},
    symbolic x coords={ATB, HUM, COVID, KERNEL},
    xtick=data,
    x tick label style={rotate=45,anchor=east, font=\sffamily},
    axis x line*=bottom,
    axis y line*=left,
    name=axis1
    ]
\addplot+[ybar, fill=black, color=black] plot coordinates {
  (ATB,0.91)
  (HUM,0.942) 
  (COVID,0.954)
  (KERNEL,0.924)
};
\addlegendentry{\textsc{PBuildGram}}
\addplot+[ybar, white, pattern=north east lines] plot coordinates {
  (ATB,0.02)
  (HUM,0.014) 
  (COVID,0.015)
  (KERNEL,0.022)
};
\addlegendentry{\textsc{RL}}
\addplot+[ybar, gray!50, fill=gray!50] plot coordinates {
  (ATB,0.07)
  (HUM,0.044)
  (COVID,0.031)
  (KERNEL,0.053)
};
\addlegendentry{\textsc{Simp}}
\end{axis}

\begin{axis}[
    ybar stacked,
    at={(6.5cm,0)},
    anchor=south west,
    width=0.3\textwidth, 
    height=0.3\textwidth,
    enlarge x limits=0.2,
    enlarge y limits=0.1,
    ymin=0, ymax=1, 
    legend style={at={(-0.5,0.5)}, anchor=east, column sep=1ex, align=center, legend cell align=left},
    ylabel={Memory peak},
    symbolic x coords={ATB, HUM, COVID, KERNEL},
    axis x line*=bottom,
    axis y line*=left,
    xtick pos=left,
    x tick label style={rotate=45,anchor=east, font=\sffamily},
    name=axis2,
]
\addplot+[ybar, fill=black, color=black] plot coordinates {
  (ATB,0.64)
  (HUM,0.60) 
  (COVID,0.27)
  (KERNEL,0.13)
  };
\addlegendentry{Sink gram.}

\addplot+[ybar, white, pattern=north east lines] plot coordinates {
  (ATB,0.18)
  (HUM,0.15) 
  (COVID,0.07)
  (KERNEL,0.04)
  };
\addlegendentry{Fps}

\addplot+[ybar, gray!50, fill=gray!50] plot coordinates {
  (ATB,0.18)
  (HUM,0.25)
  (COVID,0.66)
  (KERNEL,0.83)
  };
\addlegendentry{Sat. data}
\end{axis}

\begin{axis}[
    ybar,
    anchor=south west,
    width=0.3\textwidth, 
    height=0.3\textwidth,
    bar width=barW, 
    bar shift=-barShift, 
    symbolic x coords={4,8,12,16,20,24},
    axis y line*=left,
    axis x line=none,
    ymin=0, ymax=320,
    ylabel=Speed MB/s,
    xlabel=Number of threads,
    enlarge x limits=0.15,
    enlarge y limits=0.1,
    axis x line*=bottom,
    xtick=data,
    at={(11cm,0)},
 ]
\addplot[draw=black, fill=black]
     coordinates {
        (4, 82.68)
        (8, 146.58)
        (12, 200.94)
        (16, 244.73)
        (20, 282.36)
        (24, 314.92)
    };
\end{axis} 
\begin{axis}[
    ybar,
    anchor=south west,
    width=0.3\textwidth, 
    height=0.3\textwidth,
    bar width=barW,
    bar shift=barShift,
    symbolic x coords={4,8,12,16,20,24},
    axis y line*=right,
    ymin=0, ymax=0.4,
    ylabel=Mem. peak bps,
    xlabel=Number of threads,
    axis x line*=bottom,
    enlarge x limits=0.15,
    enlarge y limits=0.1,
    yticklabel style={color=gray!85},
    ylabel style={color=gray!70, rotate=180},
    at={(11cm,0)},
    name=axis3,
    tick align=inside
]
\addplot[gray!70, fill=gray!70]
coordinates {
    (4, 0.2)
    (8, 0.2)
    (12, 0.2)
    (16, 0.2)
    (20, 0.31)
    (24, 0.33)
    };
\end{axis}
\node [left = 4.1 cm of axis1.north west, anchor=south]  {(A)};
\node [left = 3.5 cm of axis2.north west, anchor=south]  {(B)};
\node [left = 1.15 cm of axis3.north west, anchor=south]  {(C)};
\end{tikzpicture}
}
\caption{Performance of \lcg. (A) Running time breakdown. (B) Memory peak breakdown. ``Fps'' are the fingerprints in \textsc{PBuildGram}, and ``Sat. data'' are arrays and grammars of the buffers (Section~\ref{sec:gram_enc}). (C) Performance of \lcg~in \humans~relative to the number of threads. The left y axis is the compression speed and the right y axis is the memory peak.}
\label{fig:lcg_per}
\end{figure}

\section{Conclusions and further work}

We presented a parallel grammar compressor that processes texts at high speeds while achieving high compression ratios. Our working memory usage is still high compared to popular general-purpose compressors like \sevz, but we can greatly reduce the gap by using VByte encoding or keeping some parts of the grammar on disk. On the other hand, we use substantially less memory than popular grammar compressors. In fact, to our knowledge, \lcg~is the only grammar-based tool that scales to terabytes of data without crashing or exhausting computational resources. Furthermore, our simple strategy captures repetitions from distant parts of the text, making it more robust than other widely spread compression heuristics. Additional reductions in \lcg~are possible by using greedy methods, such as RePair, or statistical compression on top of the output grammar, but it will slow down the analysis of strings.
As mentioned, our goal is not only to compress but also to scale string processing algorithms in massive collections. In the literature, it has been shown that the use of locally consistent grammars can speed up those algorithms~\cite{diaz2023bwt,diaz2023com} but the efficient computation of the grammar remains a bottleneck. We solved that problem in this work. Integrating our scheme with those algorithms could enable the processing of an unprecedented volume of strings.



\bibliography{references}

\appendix

\section{Implementation details}\label{sec:imp_det}

This section presents the details of the implementation of \textsc{PBuildGram}, the compression algorithm we implemented in~\lcg. We use \textsc{BuildGram} (Section~\ref{sec:gram_algo}) and \textsc{MergeGrams} (Section~\ref{sec:mer_gram}) as building blocks to parallelize the process without losing compression power. Section~\ref{sec:gram_enc} describes the encoding we use for locally consistent grammars, Section~\ref{sec:gram_mod} introduces some changes in \textsc{BuildGram} to make parallel execution easier, and Section~\ref{sec:pbuild_gram} describes the steps of \textsc{PBuildGram}.  

\subsection{Grammar encoding}\label{sec:gram_enc}

Let $\mathcal{G}=\{\Sigma, V, \mathcal{R}, S\}$ be a locally consistent grammar of height $l$ generating strings in $\mathcal{T}=\{T_1, T_2, \ldots, T_k\}$. Additionally, let $(\mathcal{R}^{i}, V^{i})$ be the $ith$ pair, with $i \in [1..l-1]$, of the level-based partition of $\mathcal{G}$ (see Section~\ref{sec:mer_gram}). A hash table $H^{i}$ keeps for every $X \rightarrow Q[1..q] \in \mathcal{R}^{i}$ the key-value pair $(Q[1..q], X)$, while an array $F^{i}[1..|V^{i}|]$ stores the fingerprints in $h^{i}$ of $V^{i}$. The array $F^{0}[1..\sigma]$ stores the fingerprints in $h^{0}$ for $\Sigma$. Additionally, the array $C[1..k]$ maintains the symbols representing the compressed strings of $\mathcal{T}$. Similarly to Section~\ref{sec:mg_imp}, the sets $E^{1}, E^{2}, \ldots, E^{l-1}$ store in $E^{i} \subseteq [1..k]$ the indexes of $C[1..k]$ with level-$i$ symbols. Overall, the encoding of $\mathcal{G}$ is $(\Sigma, F^{0}), (H^{1}, F^{1}, E^{1}), \ldots, (H^{l-1}, F^{l-1}, E^{l-1})$, and $C$. We remark that the nonterminals in $(\mathcal{R}^{i}, V^{i})$ represent ranks. Specifically, the value $X=r$ for a rule $X \rightarrow Q[1..q]$ means that this rule is the $rth$ in $\mathcal{R}^{i}$, while $Q[u]=r$, with $u \ [1..q]$, means that the rule where $Q[u] \in V^{i-1}$ is the left-hand side of the $rth$ in $\mathcal{R}^{i-1}$. The use of ranks simplifies the merge with other grammars (see Section~\ref{sec:mg_imp}). From now on, we will use the operator $space(\mathcal{G})$ to denote the amount of bits for the encoding. We extend it so that $space(\mathcal{G}_1, \mathcal{G}_2, \ldots, \mathcal{G}_p)$ represents $\sum^{p}_{j=1} space(\mathcal{G}_j)$. We implemented each hash table $H^{i}$ using Robin Hood hashing, storing the nonterminal phrases in 32-bit cells. Finally, we implemented $\mathcal{H}$ using the \texttt{C++} library \texttt{xxHash}\footnote{\url{https://github.com/Cyan4973/xxHash}}. 

\subsection{Modification to the grammar algorithm}\label{sec:gram_mod}

We modify \textsc{BuildGram} to operate in parallel efficiently. The new signature of the algorithm is $\textsc{BuildGram}(\mathcal{T}, \mathcal{H}, \mathcal{G}_a, \mathcal{G}_b)=\mathcal{G}_a$, where $\mathcal{G}_a$ and $\mathcal{G}_b$ are two (possible nonempty) grammars, and the output is an updated version of $\mathcal{G}_a$. In this variant, we only record phrases in $\mathcal{T}$ that are not in $\mathcal{G}_a$ or $\mathcal{G}_b$. We store these phrases in $\mathcal{G}_a$ and keep $\mathcal{G}_b$ unchanged, using it as a ``read-only'' component. However, the general strategy to compress $\mathcal{T}$ remains the same. The convenience of the change will become evident in the next section. We assume that $\mathcal{G}_a$ and $\mathcal{G}_b$ use the encoding we described in Section~\ref{sec:gram_enc}. We add the subscript $a$ or $b$ to the encoding's components to differentiate their origin, $\mathcal{G}_a$ or $\mathcal{G}_b$ (respectively).

\textsc{BuildGram} now works as follows: let us assume that we are in the $i$th parsing round and that we receive the partially compressed collection $\mathcal{T}^{i}$ as input. The alphabet of each $T^{i}_j \in \mathcal{T}^{i}$ is $V^{i-1}_a \cup V^{i-1}_b$, and we use the fingerprints in $F^{i-1}_a$ and $F^{i-1}_b$ to compute the types of each $T^{i}[\ell]$ (Equation~\ref{eq:lex_types}) and thus the parsing phrases. Notice that, when $i=1$, it holds $V^{0}_a=V^{0}_b=\Sigma$ and $F^{i-1}_a=F^{i-1}_b$. Let $Q[1..q]=T^{i}_j[\ell..\ell+q-1]$ be the active phrase in the parsing. We first check if $Q[1..q]$ exists in $H^{i}_b$ as a key. If that is the case, we get its corresponding value $X$ in the hash table and assign it to the active phrase. On the other hand, if $Q[1..q]$ is not a key in $H^{i}_b$, we perform a lookup operation in $H^{i}_a$. Like before, if $Q[1..q]$ is a key there, we get the corresponding value $X$ and assign it to the active phrase. Finally, if $Q[1..q]$ is not a key in $H^{i}_a$ either, we insert the active phrase $T^{i}_j[\ell..\ell+q-1]=Q[1..q]$ into $H^{i}_a$ associated with a new metasymbol $X$. Let $s_a$ and $s_b$ be the sizes of $H^{i}_a$ and $H^{i}_b$ (respectively) before starting the parsing round $i$. The value we assign to the new metasymbol is $X=s_a+s_b+1$. Subsequently, we replace $Q[1..q]$ by $X$ in $T^{i}_j$ and move to the next phrase. Before ending the parsing round, we store in $F^{i}_a$ the fingerprints in $h^{i}$ for the new phrases we inserted in $H^{i}_a$. Let $Q[1..q]$ be one of the new phrases, and let $X$ be its metasymbol. We compute the fingerprint $h^{i}(Q[1..q])$ using Equation~\ref{eq:fast_hash} and store the result in $F^{i}_a[X-s_b]$. Identifying the precedence of a symbol $T^{i+1}[\ell] \in V^{i}$ in the next round $i+1$ is simple: $T^{i+1}[\ell]=Y>s_b$ means that the phrase associated with $Y$ is a key in $H^{i}_a$ and its fingerprint in $h^{i}$ is in $F^{i}_a[Y-s_b]$ ($s_b$ never changes). On the other hand, $Y \leq s_b$ means that its phrase is in $H^{i}_b$ and its fingerprint in $h^{i}$ in $F^{i}_b[Y]$. Recall that we need $F^{i}_a$ and $F^{i}_b$ to compute the position types for $T^{i+1}_x$. Similarly, we can obtain the fingerprints in $h^{i+1}$ for the new phrases of $H^{i}_a$ by modifying Equation~\ref{eq:fast_hash} to receive two fingerprint arrays $F^{i}_a$ and $F^{i}_b$ instead of one.

\subsection{Parallel grammar construction}\label{sec:pbuild_gram}

Now that we have explained our grammar encoding and the changes that we need to compress in parallel, we are ready to describe \textsc{PBuildGram}. Figure~\ref{fig:pbuild_gram} shows the steps in detail. We receive as input a string collection $\mathcal{T}$ (a file), the number $p$ of compression threads, and a threshold $t$ indicating the approximate amount of working memory we can use. We first initialize $p$ distinct buffers $(B_1, \mathcal{G}_1, K_1), (B_2, \mathcal{G}_2, K_2), \ldots (B_p, \mathcal{G}_p, K_p)$ and an empty grammar $\mathcal{G}$. Each buffer $j$ is a triplet $(B_j, \mathcal{G}_j, K_j)$, where $B_j$ is an array to store chunks of $\mathcal{T}$ (one at a time), $\mathcal{G}_j$ is the grammar where we compress the chunks we load into $B_j$, and $K_j$ is an array of pairs $(x_1,y_1), (x_2, y_2)\ldots$ indicating the chunks we have loaded into $B_j$. Specifically, $K_j[u]=(x, y)$ means that the $uth$ chunk we loaded into $B_j$ contained the subset $T_{x}, T_{{x}+1}, \ldots, T_{x+y-1} \subseteq \mathcal{T}$. Notice that these strings lie contiguously in $\mathcal{T}$'s file. There is also an array $K$ with information equivalent to that of $K_j$ but for $\mathcal{G}$. 

\begin{figure}[t]
\centering
\includegraphics[width=0.8\textwidth]{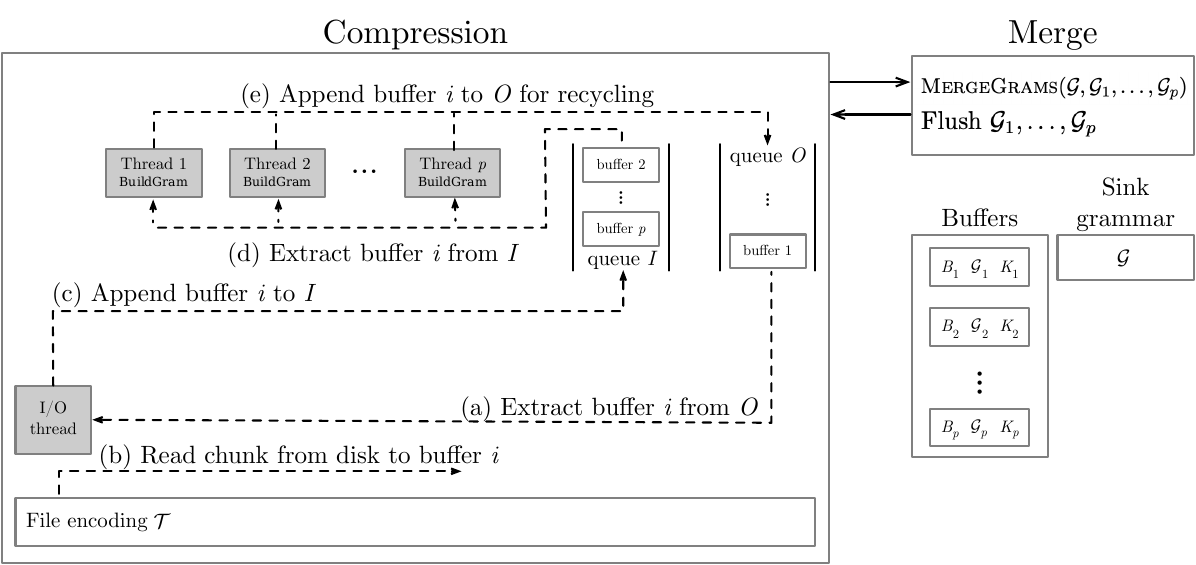}
\caption{Schematic representation of \textsc{PBuildGram}. The steps (a-e) indicate the cycle of a buffer during the compression step.}
\label{fig:pbuild_gram}
\end{figure}

\textsc{PBuildGram} consists in a loop that interleaves two steps, compression and merge. During the compression step, $p$ parallel threads compress the buffers $B_1, B_2, \ldots, B_p$ into the corresponding buffer grammars $\mathcal{G}_1, \mathcal{G}_2, \ldots, \mathcal{G}_p$, and continue doing it while $space(\mathcal{G}_1, \mathcal{G}_2, \ldots, \mathcal{G}_p) < t$. When the space exceeds the threshold, a merge step collapses $\mathcal{G}_1, \mathcal{G}_2, \ldots, \mathcal{G}_p$ into the sink grammar $\mathcal{G}$ and flushes the $p$ buffers. The algorithm then enters a new iteration that restarts the compression from the point in $\mathcal{T}$ where it left it the last time.

The compression step initializes an I/O thread that reads $\mathcal{T}$ from the disk, loading the chunks sequentially from left to right in the arrays $B_1, B_2, \ldots, B_p$. On the other hand, the $p$ compression threads process the arrays $B_1, B_2, \ldots, B_p$ in parallel using the variant of $\textsc{BuildGram}$ we described in Section~\ref{sec:gram_mod}. Every thread receives a buffer $j \in [1..p]$ and runs $\textsc{BuildGram}(B_j, \mathcal{H}, \mathcal{G}_j, \mathcal{G})=\mathcal{G}_j$. We syncronize the I/O thread with the compression threads using two concurrent queues $I$ and $O$. The queue $I$ keeps the chunks that are ready to be processed by the compressor threads, whereas $O$ contains the buffers that were already processed and can be recycled by the I/O thread to insert new chunks. When the algorithm starts, $O$ contains all buffers. 

The synchronization process works as follows: let $u$ be the next chunk of $\mathcal{T}$ that \textsc{PBuildGram} has to process. The I/O thread extracts the head $(B_j, \mathcal{G}_j, K_j)$ from $O$, reads the $u$th chunk $T_{x}, T_{x+1}, \ldots, T_{x+y} \subseteq \mathcal{T}$ from disk, and loads it into $B_j$. Subsequently, it appends $(x,y)$ to $K_j$, and finally it appends $(B_j, \mathcal{G}_j, K_j)$ to $I$. The I/O thread continues to process the next chunks $u+1, u+2, \ldots$ in the same way as long as the compression process remains active. On the other hand, each compression thread tries to acquire the next buffer available $(B_j, \mathcal{G}_j, K_j)$ from the head of $I$. After the thread acquires the buffer and runs \textsc{BuildGram}, it flushes $B_j$ and pushes $(B_j, \mathcal{G}_j, K_j)$ into $O$, thus marking this buffer for recycling. Notice that a compression thread can process multiple noncontiguous chunks of $\mathcal{T}$ and collapse their information into the same grammar $\mathcal{G}_j$. However, later in the execution of the algorithm, we modify $\mathcal{G}_j$ using the information in $K_j$ to fix this problem. 

During the merge step, we execute $\textsc{MergeGram}(\mathcal{G}, \mathcal{G}_j) = \mathcal{G}$ with each $j \in [1..p]$. It is possible to collapse the grammars in parallel in a merge sort fashion: let us assume w.l.o.g that $p+1$ is a power of two. Thus, $(p+1)/2$ threads execute in parallel processes $\textsc{MergeGram}(\mathcal{G}_1, \mathcal{G}_2)=\mathcal{G}_1, \textsc{MergeGram}(\mathcal{G}_3, \mathcal{G}_4)=\mathcal{G}_3, \ldots, \textsc{MergeGrams}(\mathcal{G}, \mathcal{G}_p)=\mathcal{G}$ to produce the new grammars $\mathcal{G}_1, \mathcal{G}_3, \ldots, \mathcal{G}$. Subsequently, the $(p+1)/4$ threads collapse the new grammars in the same way, and the process continues until only one $\mathcal{G}$ remains. Every time we execute $\textsc{MergeGram}(\mathcal{G}_j, \mathcal{G}_{j+1})=\mathcal{G}_j$, we also concatenate the corresponding arrays $K_j$ and $K_{j+1}$ to keep track of the chunks of $\mathcal{T}$ that the resulting $\mathcal{G}_{j}$ encodes. Notice that after the merge, $K$ has all the information of $K_1, K_2, \ldots, K_p$. Finally, we reset the buffers and begin a new iteration of compression.

After we process all the chunks, we perform one last merge step to collapse the buffers in $\mathcal{G}$ and then use the information in $K$ to reorder the elements of $C[1..k]$. Once we finish, we return $\mathcal{G}$ and complete the execution of \textsc{PBuildGram}.

\subsection{Storing the final grammar}

As mentioned, we post-process the output of \textsc{PBuildGram} using \textsc{RL} and then \textsc{Simp}. The resulting file of this process (i.e., the output of~\lcg) uses $G\log g$ bits to encode $\mathcal{R}$,  $g\log G$ bits to store pointers on the right-hand sides of $\mathcal{R}$, and $k \log w$ bits to store pointers to the compressed sequences of the strings in $\mathcal{T}$.

\subsection{Advantage of our parallel scheme}

Our parallel scheme can use a high number of threads with little contention (and thus achieve high compression speeds), while keeping the amount of working memory manageable. A thread executing $\textsc{BuildGram}(B_j, \mathcal{H}, \mathcal{G}_j, \mathcal{G})$  is the only one modifying $B_j$ and $\mathcal{G}_j$, and although the sink $\mathcal{G}$ can be accessed by other threads concurrently, they only read information (i.e., little to no contention). On the other hand, there is some contention when the threads modify the queues $I$ and $O$ concurrently to remove or insert buffers (respectively). However, the compression threads spend most of their time executing \textsc{BuildGram} (modifying a queue is cheap), and it is unlikely that many of them attempt to access the same queue at the same time. On the other hand, the I/O thread might have more contention as we increase the number of threads because it has to compete to modify $O$.

As mentioned above, \textsc{PBuildGram} keeps the consumption of working memory manageable as we add more threads. Specifically, $space(\mathcal{G}_1, \mathcal{G}_2, \ldots, \mathcal{G}_p)$ is not bigger than $space(\mathcal{G})$ by a factor of $p$. In the first compression iteration, the sink grammar $\mathcal{G}$ is empty and because the compression threads do not synchronize when they execute \textsc{BuildGram}, the grammars $\mathcal{G}_1, \mathcal{G}_2, \ldots, \mathcal{G}_p$ will be redundant. Therefore, memory usage will grow rapidly at the beginning, exceeding the memory threshold $t$ and trigger a merge. In this phase, we will collapse redundant content into $\mathcal{G}$ and delete buffer grammars, thus reducing memory consumption. In the next compression iteration, $\mathcal{G}$ will be non-empty, so every instance $\textsc{BuildGram}(B_j, \mathcal{H}, \mathcal{G}_j, \mathcal{G})=\mathcal{G}_j$ will only add to $\mathcal{G}_j$ what is not in $\mathcal{G}$. In addition, $\mathcal{G}_j$ will grow slower with every new merge iteration because there will be less ``new'' sequence content in $\mathcal{T}$. We note that memory usage still depends on $max(t, space(\mathcal{G}))$, with $space(\mathcal{G})$ depending, in turn, on the amount of repetitiveness in $\mathcal{T}$.

\section{Speeding up string processing algorithms (sketch)}\label{sec:speed_up}

In this section, we briefly explain how locally consistent grammars help speed up string processing algorithms (the idea might vary with the application). The most expensive operation in string algorithms is to find matching sequences. We use the fact that a grammar collapses redundant sequence information, so the search space in which a string algorithm has to operate is significantly smaller in the grammar than in the text.

Let $\mathcal{G}=\{\Sigma, V, \mathcal{R}, S\}$ be a locally consistent grammar generating elements in $\mathcal{T}$, and let $\textsc{ALG}(\mathcal{G})$ be a string processing algorithm. As before, we divide $\mathcal{R}=\{\mathcal{R}^{1}, \mathcal{R}^{2}, \ldots, \mathcal{R}^{l-1}\}$ according to levels. We regard the right-hand sides of $\mathcal{R}^{i}$ as a string collection and run (a section of) \textsc{ALG} using $\mathcal{R}^{i}$ as input. The process will give some complete answers and some partial answers. We pass the complete answers to the next level $i+1$ and use them as satellite data. When \textsc{ALG} is recursive and returns from the recursion to level $i$, we use the new information to complete what is missing in the partial answers.

An example of this idea is the computation of maximal exact matches (MEM). Let $lcp(X, Y)$ be the longest common prefix between $exp(X)$ and $exp(Y)$, with $X,Y \in V^{i-1}$. Similarly, let $lcs(X, Y)$ be the longest common suffix of $exp(X)$ and $exp(Y)$. Assume that we have run a standard algorithm to compute the MEMs on the right-hand sides of $\mathcal{R}^{i}$ and that we found a match $A[j_1..j_2]=B[k_1..k_2]$ between two right-hand sides $A, B$ of $\mathcal{R}^{i}$. Completing the MEM requires us to obtain $lcs(A[j_1-1], B[k_1])$ and $lcp(A[j_2+1], B[k_2+1])$. Once we compute all the matches in $\mathcal{R}^{i}$, we use the output so that we can compute $lcs$ and $lcs$ at the next level $i+1$. We still have to project the MEMs to text positions, but this operation is cheaper than computing MEMs in their plain locations, especially if the text is redundant.

The grammar we produce with \textsc{PBuildGram} still requires some modifications to run string algorithms as described above, but the cost of this transformation is proportional to the grammar size, not the text size.

\section{Figures}

\definecolor{gray1}{gray}{0.2}  
\definecolor{gray2}{gray}{0.4}  
\definecolor{gray3}{gray}{0.6}  
\definecolor{gray4}{gray}{0.8}

\begin{figure}[!htp]
    \centering
    \resizebox{\textwidth}{!}{
    \begin{subfigure}{0.63\textwidth}
        \centering
        \begin{tikzpicture}
            \begin{axis}[
                xlabel={Parsing round},
                ylabel={Nonter. percentage},
                xlabel style={font=\Large},
                ylabel style={font=\Large},
                legend pos=north west,
                name=axisl
            ]
                \addplot[mark=square*, black, thick] 
                coordinates {
                 (1, 0.004)
                 (2, 10.841)
                 (3, 34.673)
                 (4, 19.056)
                 (5, 10.599)
                 (6, 6.486)
                 (7, 4.322)
                 (8, 3.183)
                 (9, 2.537)
                 (10, 2.113)
                 (11, 1.889)
                 (12, 1.821)
                 (13, 1.700)
                 (14, 0.758)
                 (15, 0.018)
                 (16, 0.000)
                };

                \addplot[mark=x, gray1, thick, mark size=5pt] 
                coordinates {
                   (1,0.06)
                   (2,8.07)
                   (3,18.59)
                   (4,15.58)
                   (5,14.08)
                   (6,13.40)
                   (7,11.81)
                   (8,8.68)
                   (9,4.77)
                   (10,2.03)
                   (11,0.94)
                   (12,0.57)
                   (13,0.41)
                   (14,0.37)
                   (15,0.32)
                   (16,0.23)
                   (17,0.09)
                   (18,0.02)
                   (19,0.00) 
                };
                
                \addplot[mark=triangle*, gray2, mark size=4pt] 
                coordinates {
                  (1, 0.60)
                  (2, 6.72)
                  (3, 8.93)
                  (4, 9.18)
                  (5, 9.99)
                  (6, 11.95)
                  (7, 14.13)
                  (8, 13.87)
                  (9, 9.95)
                  (10, 7.37)
                  (11, 7.30)
                };
                \addplot[mark=diamond, gray3, thick, mark size=4pt] 
                coordinates {
                 (1,2.38)
                 (2,31.95)
                 (3,37.36)
                 (4,17.56)
                 (5,6.40)
                 (6,2.29)
                 (7,0.86)
                 (8,0.36)
                 (9,0.19)
                 (10,0.14)
                 (11,0.13)
                 (12,0.12)
                 (13,0.12)
                 (14,0.11)
                 (15,0.03)
                 (16,0.00)
                };
            \end{axis}
            \node [left = 0.95 cm of axisl.north west, anchor=south]  {\Large (A)};
        \end{tikzpicture}
    \end{subfigure}
    \hfill
    \begin{subfigure}{0.63\textwidth}
        \centering
        \begin{tikzpicture}
            \begin{axis}[
                xlabel={Parsing round},
                ylabel={Gram. size percentage},
                xlabel style={font=\Large},
                ylabel style={font=\Large},
                legend pos=north east,
                name=axisr,
                legend style={column sep=1ex, legend cell align=left, font=\sffamily}
            ]
            
                \addplot[mark=square*, black, thick] 
                coordinates {
                   (1,0.047)
                   (2,16.919)
                   (3,35.478)
                   (4,18.923)
                   (5,10.399)
                   (6,6.090)
                   (7,3.725)
                   (8,2.433)
                   (9,1.645)
                   (10,1.087)
                   (11,0.754)
                   (12,0.620)
                   (13,0.551)
                   (14,0.244)
                   (15,0.006)
                   (16,0.000)
                };
                \addlegendentry{ATB}

                \addplot[mark=x, gray1, thick, mark size=5pt] 
                coordinates {
                    (1,35.81)
                    (2, 7.54)
                    (3, 11.84)
                    (4, 9.89) 
                    (5, 8.99)
                    (6, 8.53)
                    (7, 7.38)
                    (8, 5.24)
                    (9, 2.70)
                    (10, 1.04)
                    (11, 0.40)
                    (12, 0.19)
                    (13, 0.11)
                    (14, 0.08)
                    (15, 0.06)
                    (16, 0.04)
                    (17, 0.02)
                    (18, 0.00)
                    (19, 0.00)
                };
                \addlegendentry{HUM}
                
                \addplot[mark=triangle*, gray2, mark size=4pt] 
                coordinates {
                  (1,35.57)
                  (2,5.18)
                  (3,6.39)
                  (4,6.60) 
                  (5,6.99)
                  (6,8.49)
                  (7,9.85)
                  (8,9.24)
                  (9,5.62)
                  (10,2.70)
                  (11,1.59)
                };
                \addlegendentry{COVID}
                \addplot[mark=diamond, gray3, thick, mark size=4pt] 
                coordinates {
                (1,4.86)
                (2,34.61)
                (3,34.72)
                (4,15.62)
                (5,5.65)
                (6,2.00)
                (7,0.72)
                (8,0.28)
                (9,0.12)
                (10,0.06)
                (11,0.04)
                (12,0.04)
                (13,0.04)
                (14,0.03)
                (15,0.01)
                (16,0.00)
                };
                \addlegendentry{KERNEL}
            \end{axis}
        \node [left = 0.95 cm of axisr.north west, anchor=south]  {\Large (B)};
        \end{tikzpicture}
    \end{subfigure}
    }
    \caption{Number of nonterminals and phrases generated in each parsing round for the output grammar $\mathcal{G}=\{\Sigma, V, \mathcal{R}, S\}$ of \textsc{PBuildGram}. The x-axes are the parsing rounds $i =1,2, \ldots, l$. (A) Percentage $g^{i}/g \times 100$ that the number $g^{i}=|V^{i}|$ of level-$i$ nonterminals contributes to the total number $g=|V|$ of nonterminals. (B) Percentage $G^{i}/G \times 100$ that the number of symbols $G^{i} = \sum_{X \in V^{i}} |rhs(X)|$ contributes to the total grammar size $G$. High percentages denote expensive rounds in each collection.}
    \label{fig:pround_breakdown}
\end{figure}

\end{document}